\def\thickhline{%
  \noalign{\ifnum0=`}\fi\hrule \@height \thickarrayrulewidth \futurelet
   \reserved@a\@xthickhline}
\def\@xthickhline{\ifx\reserved@a\thickhline
               \vskip\doublerulesep
               \vskip-\thickarrayrulewidth
             \fi
      \ifnum0=`{\fi}}
\newcommand{\removelatexerror}{\let\@latex@error\@gobble}
\newlength{\thickarrayrulewidth}
\begin{document}
\author{Zeyu Ding}
\affiliation{
  \institution{Pennsylvania State University}
}
\email{dxd437@psu.edu}

\author{Yuxin Wang}
\affiliation{
  \institution{Pennsylvania State University}
}
\email{ywang@cse.psu.edu}

\author{Guanhong Wang}
\affiliation{
  \institution{Pennsylvania State University}
}
\email{gpw5092@psu.edu}

\author{Danfeng Zhang}
\affiliation{
  \institution{Pennsylvania State University}
}
\email{zhang@cse.psu.edu}

\author{Daniel Kifer}
\affiliation{
  \institution{Pennsylvania State University}
}
\email{dkifer@cse.psu.edu}

\title{Detecting Violations of Differential Privacy} 
\copyrightyear{2018} 
\acmYear{2018} 
\setcopyright{acmcopyright}
\acmConference[CCS '18]{2018 ACM SIGSAC Conference on Computer and Communications Security}{October 15--19, 2018}{Toronto, ON, Canada}
\acmBooktitle{2018 ACM SIGSAC Conference on Computer and Communications Security (CCS '18), October 15--19, 2018, Toronto, ON, Canada}
\acmPrice{15.00}
\acmDOI{10.1145/3243734.3243818}
\acmISBN{978-1-4503-5693-0/18/10}

\begin{abstract}
The widespread acceptance of differential privacy has led to the publication of many sophisticated algorithms for protecting privacy. However, due to the subtle nature of this privacy definition, many such algorithms have bugs that make them violate their claimed privacy. In this paper, we consider the problem of producing counterexamples for such incorrect algorithms. The counterexamples are designed to be short and human-understandable so that the counterexample generator can be used in the development process -- a developer could quickly explore variations of an algorithm and investigate where they break down. 
Our approach is statistical in nature. It runs a candidate algorithm many times and uses statistical tests to try to detect violations of differential privacy.
An evaluation on a variety of incorrect published algorithms validates the usefulness of our approach: it correctly rejects incorrect algorithms and provides counterexamples for them within a few seconds. 
\end{abstract}

\begin{CCSXML}
<ccs2012>
<concept>
<concept_id>10002978.10003029.10011150</concept_id>
<concept_desc>Security and privacy~Privacy protections</concept_desc>
<concept_significance>500</concept_significance>
</concept>
</ccs2012>
\end{CCSXML}

\ccsdesc[500]{Security and privacy~Privacy protections}

\keywords{Differential privacy; counterexample detection; statistical testing}

\maketitle

\section{Introduction}\label{sec:intro}
Differential privacy has become a de facto standard for extracting information from a dataset (e.g., answering queries, building machine learning models, etc.) while protecting the confidentiality of individuals whose data are collected. Implemented correctly, it guarantees that any individual's record has very little influence on the output of the algorithm.

However, the design of differentially private algorithms is very subtle and error-prone -- it is well-known that a large number of published algorithms are incorrect (i.e. they violate differential privacy). A sign of this problem is the existence of papers that are solely designed to point out errors in other papers \cite{lyu2017understanding,ashwinsparse}. The problem is not limited to novices who may not understand the subtleties of differential privacy; it even affects experts whose goal is to design sophisticated algorithms for accurately releasing statistics about data while preserving privacy.

There are two main approaches to tackling this prevalence of bugs: programming platforms and verification. Programming platforms, such as PINQ \cite{pinq}, Airavat \cite{Roy10Airavat}, and GUPT \cite{gupt} provide a small set of primitive operations that can be used as building blocks of algorithms for differential privacy. They make it easy to create correct differentially private algorithms at the cost of accuracy (the resulting privacy-preserving query answers and models can become less accurate). Verification techniques, on the other hand, allow programmers to implement a wider variety of algorithms and verify proofs of correctness (written by the developers) \cite{Barthe12, EasyCrypt, BartheICALP2013, Barthe14, Barthe16, BartheCCS16} or synthesize most (or all) of the proofs \cite{Aws:synthesis,lightdp,Fuzz,DFuzz}. 

In this paper, we take a different approach: finding bugs that cause algorithms to violate differential privacy, and generating counterexamples that illustrate these violations. We envision that such a counterexample generator would be useful in the development cycle -- variations of an algorithm can be quickly evaluated and buggy versions could be discarded (without wasting the developer's time in a manual search for counterexamples or a doomed search for a correctness proof). Furthermore, counterexamples can help developers understand why their algorithms fail to satisfy differential privacy and thus can help them fix the problems. This feature is absent in all existing programming platforms and verification tools.
To the best of our knowledge, this is the first paper that treats the problem of detecting counterexamples in incorrect implementations of differential privacy. 

Although recent work on relational symbolic execution \cite{pcg17} aims for simpler versions of this task (like detecting incorrect calculations of sensitivity), it is not yet powerful enough to reason about probabilistic computations. Hence, it cannot 
detect counterexamples in sophisticated algorithms like the \emph{sparse vector technique} \cite{dwork2014algorithmic}, which satisfies differential privacy but is notorious for having many incorrect published variations \cite{lyu2017understanding,ashwinsparse}.

Our counterexample generator is designed to function in black-box mode as much as possible. That is, it executes code with a variety of inputs and analyzes the (distribution of) outputs of the code. This allows developers to use their preferred languages and libraries as much as possible; in contrast, most language-based tools will restrict developers to specific programming languages and a very small set of libraries. In some instances, the code may include some tuning parameters. In those cases, we can use an optional symbolic execution model (our current implementation analyzes python code) to find values of those parameters that make it easier to detect counterexamples. Thus, we refer to our method as a \emph{semi-black-box} approach.

Our contributions are as follows:
\begin{itemize}[leftmargin=5mm]
\item We present the first counterexample generator for differential privacy. It treats programs as semi-black-boxes and uses statistical tests to detect violations of differential privacy.
\item We evaluate our counterexample generator on a variety of sophisticated differentially private algorithms and their common incorrect variations. These include the sparse vector method and noisy max \cite{dwork2014algorithmic}, which are cited as the most challenging algorithms that have been formally verified so far \cite{Aws:synthesis,Barthe16}. In particular, the sparse vector technique is notorious for having many incorrect published variations \cite{lyu2017understanding,ashwinsparse}.  We also evaluate the counterexample generator on some simpler algorithms such as the histogram algorithm \cite{Dwork:2006:DP:2097282.2097284}, which are also easy for novices to get wrong (by accidentally using too little noise). In all cases, our counterexample generator produces counterexamples for incorrect versions of the algorithms, thus showing its usefulness to both experts and novices.
\item The false positive error (i.e. generating "counterexamples" for correct code) of our algorithm is controllable because it is based on statistical testing. The false positive rate can be made arbitrarily small just by giving the algorithm more time to run.
\end{itemize}

Limitations: it is impossible to create counterexample/bug detector that works for all programs. For this reason, our counterexample generator is not intended to be used in an adversarial setting (where a rogue developer wants to add an algorithm that appears to satisfy differential privacy but has a back door). In particular, if a program satisfies differential privacy except with an extremely small probability (a setting known as \emph{approximate differential privacy} \cite{dworkKMM06:ourdata}) then our counterexample generator may not detect it. Solving this issue is an area for future work.

The rest of the paper is organized as follows. Related work is discussed in Section \ref{sec:related}. Background on differential privacy and statistical testing is discussed in Section \ref{sec:background}. The counterexample generator is presented in Section \ref{sec:detect}. Experiments are presented in Section \ref{sec:experiments}. Conclusions and future work are discussed in Section \ref{sec:conc}.

\section{Related Work}\label{sec:related}
\paragraph{Differential privacy} The term differential privacy covers a family of privacy definitions that include pure $\epsilon$-differential privacy (the topic of this paper) \cite{dwork06Calibrating} and its relaxations: approximate $(\epsilon,\delta)$-differential privacy \cite{dworkKMM06:ourdata}, concentrated differential privacy \cite{DR2016:cdp,BS2016:zcdp}, and Renyi differential privacy \cite{M2017:Renyi}. The pure and approximate versions have received the most attention from algorithm designers (e.g., see the book \cite{dwork2014algorithmic}). However, due to the lack of availability of easy-to-use debugging and verification tools, a considerable fraction of published algorithms are incorrect. In this paper, we focus on algorithms for which there is a public record of an error (e.g., variants of the sparse vector method \cite{lyu2017understanding,ashwinsparse}) or where a seemingly small change to an algorithm breaks an important component of the algorithm (e.g., variants of the noisy max algorithm \cite{dwork2014algorithmic,Barthe16} and the histogram algorithm \cite{Dwork:2006:DP:2097282.2097284}).

\paragraph{Programming platforms and verification tools}

Several dynamic tools~\cite{pinq, Roy10Airavat,Tschantz11,Xu2014,
EbadiPOPL2015} exist for enforcing differential privacy.  Those tools track the
privacy budget consumption at runtime, and terminates a program when the
intended privacy budget is exhausted. On the other hand, static methods exist
for verifying that a program obeys differential privacy during any execution,
based on relational program logic~\cite{Barthe12, EasyCrypt, BartheICALP2013,
Barthe14, Barthe16, BartheCCS16, Aws:synthesis} and relational type
system~\cite{lightdp,Fuzz,DFuzz}. We note that those methods are largely orthogonal to
this paper: their goal is to verify a correct program or to terminate an
incorrect one, while our goal is to detect an incorrect program and generate
counterexamples for it. The counterexamples provide valuable guidance for
fixing incorrect algorithms for algorithm designers. Moreover, we believe our
tool fills in the currently missing piece in the development of
differentially private algorithms: with our tool, immature designs can first be
tested for counterexamples, before being fed into those dynamic and static
tools.

\paragraph{Counterexample generation}

Symbolic execution~\cite{king1976symbolic,Cadar06EXE, Cadar08KLEE} is  widely
used for program testing and bug finding. One attractive feature of symbolic
execution is that when a property is being violated, it generates
counterexamples (i.e., program inputs) that lead to violations.  More relevant
to this paper is work on testing relational properties based on symbolic
execution~\cite{person2008, milushev2012, pcg17}. However, those work only
apply to deterministic programs, but the differential privacy property
inherently involves probabilistic programs, which is beyond the scope of those
work.

\section{Background}\label{sec:background}
In this section, we discuss relevant background on differential privacy and hypothesis testing.

\subsection{Differential Privacy}
We view a database as a finite multiset of records from some domain. It is sometimes convenient to represent a database by a histogram, where each cell is the count of times a specific record is present. 

Differential privacy relies on the notion of \emph{adjacent} databases. The two most common definitions of adjacency are: 
\begin{inparaenum}[(1)]
\item  two databases $D_1$ and $D_2$ are \emph{adjacent} if $D_2$ can be obtained from $D_1$ by \emph{adding or removing} a single record.
\item  two databases $D_1$ and $D_2$ are \emph{adjacent} if $D_2$ can be obtained from $D_1$ by \emph{modifying} one record.
\end{inparaenum}
The notion of adjacency used by an algorithm must be provided to the counterexample generator. We write $D_1\sim D_2$ to mean that $D_1$ is adjacent to $D_2$ (under whichever definition of adjacency is relevant in the context of a given algorithm).

We use the term $\emph{mechanism}$ to refer to an algorithm $M$ that tries to protect the privacy of its input. In our case, a mechanism is an algorithm that is intended to satisfy $\epsilon$-differential privacy:

\begin{definition}[Differential Privacy \cite{dwork06Calibrating}]\label{dp} Let $\epsilon\geq 0$. A mechanism $M$ is said to be $\epsilon$-differentially private if for every pair of adjacent databases $D_1$ and $D_2$, and every $E\subseteq \textrm{Range}(M)$, we have 
\[
P(M(D_1)\in E) \leq e^\epsilon \cdot P(M(D_2) \in E).
\]
\end{definition}
The value of $\epsilon$, called the privacy budget, controls the level of the privacy: the smaller $\epsilon$ is, the more privacy is guaranteed.

One of the most common building blocks of differentially private algorithms is the \emph{Laplace mechanism} \cite{dwork06Calibrating} , which is used to answer numerical queries. Let $\mathcal{D}$ be the set of possible databases. A numerical query is a function $q: \mathcal{D}\rightarrow\mathbb{R}^k$ (i.e. it outputs a $k$-dimensional vector of numbers). The Laplace mechanism is based on a concept called \emph{global sensitivity}, which measures the worst-case effect one record can have on a numerical query:
\begin{definition}[Global Sensitivity \cite{dwork06Calibrating}]
The $\ell_1$-global sensitivity of a numerical query $q$ is 
\[\Delta_q = \max_{D_1 \sim D_2} \lVert q(D_1) - q(D_2) \rVert_1.\]
\end{definition}

The Laplace mechanism works by adding Laplace noise (having density
$f(x\mid  b) = \frac{1}{2b}\exp\left(-|x|/b\right)$ and variance  $2b^2$)  to query answers. The chosen variance depends on $\epsilon$ and the global sensitivity.
We use the notation  $\texttt{Lap}(b)$ to refer to the Laplace noise.
\begin{definition}[The Laplace mechanism \cite{dwork06Calibrating}]
For any numerical query $q: \mathcal{D}\rightarrow\mathbb{R}^n$, the Laplace mechanism outputs
\[M(D, q, \epsilon) = q(D) + (\eta_1, \ldots, \eta_n)\]
where $\eta_i$ are independent random variables sampled from $\texttt{Lap}(\Delta_q/\epsilon)$.
\end{definition}

\begin{theorem}[\cite{dwork2014algorithmic}]
The Laplace mechanism is $\epsilon$-differentially private.
\end{theorem}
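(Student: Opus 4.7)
The plan is to prove the pointwise density ratio bound and then integrate over the set $E$. Let $b = \Delta_q/\epsilon$, and write $q(D_1) = (a_1,\ldots,a_n)$ and $q(D_2) = (a_1',\ldots,a_n')$. Since the $\eta_i$ are independent, the output $M(D,q,\epsilon)$ has a density on $\mathbb{R}^n$ given by a product of shifted Laplace densities. The first step is therefore to write down, for any point $x = (x_1,\ldots,x_n) \in \mathbb{R}^n$, the ratio of densities
\[
\frac{f_{M(D_1)}(x)}{f_{M(D_2)}(x)} \;=\; \prod_{i=1}^n \frac{\exp(-|x_i - a_i|/b)}{\exp(-|x_i - a_i'|/b)} \;=\; \exp\!\left(\frac{1}{b}\sum_{i=1}^n \bigl(|x_i - a_i'| - |x_i - a_i|\bigr)\right).
\]

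The second step is to control each term in the exponent using the reverse triangle inequality: $|x_i - a_i'| - |x_i - a_i| \leq |a_i - a_i'|$. Summing gives $\sum_i (|x_i - a_i'| - |x_i - a_i|) \leq \|q(D_1) - q(D_2)\|_1 \leq \Delta_q$, where the last inequality is the definition of global sensitivity applied to the adjacent pair $D_1 \sim D_2$. Substituting back, the density ratio is at most $\exp(\Delta_q/b) = e^\epsilon$ uniformly in $x$.

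The third step is to lift this pointwise density bound to a probability bound on arbitrary measurable $E \subseteq \mathbb{R}^n$ by integrating:
\[
P(M(D_1)\in E) = \int_E f_{M(D_1)}(x)\,dx \leq e^\epsilon \int_E f_{M(D_2)}(x)\,dx = e^\epsilon\, P(M(D_2)\in E),
\]
which is exactly the condition of Definition \ref{dp}.

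I do not expect a serious obstacle here; the proof is essentially mechanical once one recognizes that the $\ell_1$ definition of global sensitivity is precisely what is needed to collapse the product of per-coordinate Laplace ratios into a single exponential bound. The one subtle point worth being careful about is the direction of the reverse triangle inequality and the fact that the bound must hold \emph{for every} $x$ (not merely in expectation); writing the ratio in the exponential form above makes this transparent. Extending from the density-based argument to general measurable $E$ is standard measure theory and requires no additional ideas.
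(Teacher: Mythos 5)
Your proof is correct and is exactly the standard argument from the cited reference: the paper itself states this theorem as background with a citation to Dwork and Roth and gives no proof. Your pointwise density-ratio bound via the reverse triangle inequality, followed by integration over $E$, is the canonical derivation and has no gaps.
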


\subsection{Hypothesis Testing}
\label{sec:hypotest}
A statistical \emph{hypothesis} is a claim about the parameters of the distribution that generated the data.  The \emph{null hypothesis}, denoted by $H_0$  is a  statistical hypothesis that we are trying to disprove. For example, if we have two samples, $X$ and $Y$ where $X$ was generated by a Binomial$(n,p_1)$ distribution and $Y$ was generated by a Binomial$(n, p_2)$ distribution, one null hypothesis could be $p_1=p_2$ (that is, we would like to know if the data supports the conclusion that $X$ and $Y$ came from different distributions). The \emph{alternative hypothesis}, denoted by $H_1$, is the complement of the null hypothesis (e.g., $p_1\neq p_2$).

A hypothesis test is a procedure that takes in a data sample $Z$ and either rejects the null hypothesis or fails to reject the null hypothesis. 
A hypothesis test can have two types of errors: type I and type II.
A  type I error occurs if the test incorrectly rejects $H_0$ when it is in fact true. A type II error occurs if the test fails to reject $H_0$ when the alternative hypothesis is true. Type I and type II errors are analogous to  false positives and false negatives, respectively.

In most problems, controlling type I error is the most important. In such cases, one specifies a \emph{significance level} $\alpha$ and requires that the probability of a type I error be at most $\alpha$. Commonly used values for $\alpha$ are $0.05$ and $0.01$. In order to allow users to control the type I error, the hypothesis test also returns a number $p$ -- known as the p-value -- which is a probabilistic estimate of how unlikely it is that the null hypothesis is true. The user rejects the null hypothesis if $p\leq \alpha$.
In order for this to work (i.e. in order for the Type I error to be below $\alpha$), the $p$-value must satisfy certain technical conditions:
\begin{inparaenum}[(1)]
\item a $p$-value is a function of a data sample $Z$, 
\item $0\leq p(Z)\leq 1$,
\item  if the null hypothesis is true, then $P(p(Z) \leq \alpha~|~H_0)\leq \alpha$.
\end{inparaenum}

A relevant example of a hypothesis test is Fisher's exact test \cite{fisher:1935} for two binomial populations.
Let $c_1$ be a sample from a Binomial$(n_1, p_1)$  distribution and let $c_2$ be a sample from a Binomial$(n_2, p_2)$ distribution. Here $p_1$ and $p_2$ are unknown. Using these values of $c_1$ and $c_2$, the goal is to test the null hypothesis $H_0: p_1\leq p_2$ against the alternative $H_1: p_1>p_2$. Let $s=c_1+c_2$. The key insight behind Fisher's test is that if $C_1\sim $ Binomial$(n_1,p_1)$ \footnote{This is read as "$C_1$ is a random variable having the Binomial$(n_1,p_1)$ distribution".} and $C_2\sim$ Binomial$(n_2, p_2)$ and if $p_1=p_2$, then the value $P(C_1 \geq c_1~|~C_1+C_2=s)$ does not depend on the unknown parameters $p_1$ or $p_2$  and can be computed from the cumulative distribution function of the hypergeometric distribution; specifically, it is equal to $1-\text{Hypergeometric.cdf}(c_1 - 1\mid n_1+n_2, n_1, s)$. When $p_1 < p_2$, then $P(C_1 \geq c_1~|~C_1+C_2=s)$ cannot be computed without knowing $p_1$ and $p_2$. However, it is less than  $1-\text{Hypergeometric.cdf}(c_1 - 1\mid n_1+n_2, n_1, s)$. Thus it  can be shown that $1-\text{Hypergeometric.cdf}(c_1 - 1\mid n_1+n_2, n_1, s)$ is a valid $p$-value and so the Fisher's exact test rejects the null hypothesis when this quantity is $\leq \alpha$.

\section{Counterexample Detection}\label{sec:detect}
For a mechanism $M$ that does not satisfy $\epsilon$-differential privacy, the goal is to prove this failure. By Definition~\ref{dp}, this involves finding a pair of adjacent databases $D_1, D_2$ and an output event $E$ such that $P(M(D_1) \in E) > e^\epsilon P(M(D_2)\in E)$. Thus a counterexample involves finding these two adjacent inputs $D_1$ and $D_2$, the bad output set $E$, and to show that for these choices, $P(M(D_1) \in E) > e^\epsilon P(M(D_2)\in E)$.

Ideally, one would compute the probabilities $P(M(D_1) \in E)$ and $P(M(D_2)\in E)$. Unfortunately, for sophisticated mechanisms, it is not always possible to compute these quantities exactly. However, we can sample from these distributions many times  by repeatedly running $M(D_1)$ and $M(D_2)$ and counting the number of times that the outputs fall into $E$. Then, we need a statistical test to reject the null hypothesis $P(M(D_1) \in E) \leq e^\epsilon P(M(D_2)\in E)$ (or fail to reject it if the algorithm is $\epsilon$-differentially private).

We will be using the following conventions:
\begin{itemize}[leftmargin=5mm]
\item  The input to most mechanisms is actually a list of queries $Q = (q_1, \ldots, q_l)$ rather than a database directly. For example, algorithms to release differentially private histograms operate on a histogram of the data; the sparse vector mechanism operates on a sequence of queries that each have global sensitivity equal to 1.  Thus, we require the user to specify how the input query answers can differ on two adjacent databases. For example, in a histogram, exactly one cell count changes by at most 1. In the sparse vector technique \cite{dwork2014algorithmic}, every query  answer changes by at most 1.
To simplify the discussion, we abuse notation and use $D_1, D_2$ to also denote the answers of $Q$ on the input adjacent databases. For example, when discussing the sparse vector technique,  we write $D_1 = [0,0]$ and $D_2=[1,1]$. This means there are adjacent databases and a list of queries $Q=[q_1, q_2]$ such that they evaluate to $[0,0]$ on the first database and $[1,1]$ on the second database. 

\item We use $\epsilon_0$ to indicate the privacy level that a mechanism claims to achieve.

\item We use $\Omega$ for the set of all possible outputs (i.e., range) of the mechanism $M$. We use $\omega$ for a single output of $M$. 

\item We call a subset $E\subseteq \Omega$ an event. We use $p_1$ (respectively, $p_2$) to denote  $P(M(D_i)\in E)$, the probability that the output of $M$ falls into $E$ when executing on  database $D_1$ (respectively, $D_2$).

\item Some mechanisms take additional inputs, e.g., the sparse vector mechanism. We collectively refer to them as \emph{args}.

\end{itemize}

Our discussion is organized as follows. We provide an overview of the counterexample generator in Section \ref{sec:overview}. Then we incrementally explain our approach. In Section \ref{subsec:hypotest} we present the hypothesis test. That is, suppose we already have query sequences $D_1$ and $D_2$ that are generated from adjacent databases and an output set $E$, how do we test if $P(M(D_1)\in E) \leq e^\epsilon P(M(D_2)\in E)$ or $P(M(D_1)\in E) > e^\epsilon P(M(D_2)\in E)$? Next, in Section
\ref{sec:event_selection}, we consider the question of output selection. That is, suppose we already have query answers $D_1$ and $D_2$ that are generated from adjacent databases, how do we decide which $E$ should be used in the hypothesis test? Finally, in Section \ref{subsec:input}, we consider the problem of generating the adjacent query sequences $D_1$ and $D_2$ as well as additional inputs \emph{args}.

The details of specific mechanisms we test for violations of differential privacy will be given in the experiments in Section \ref{sec:experiments}.

\subsection{Overview}\label{sec:overview}
At a high level, the counterexample generator can be summarized in the pseudocode in Algorithm \ref{alg:detection}. First, it generates an InputList, a set of candidate tuples of the form $(D_1, D_2, args)$. That is, instead of returning a single pair of adjacent inputs $D_1, D_2$ and any auxiliary arguments the mechanism may need, we return multiple candidates which will be filtered later. Each adjacent pair $(D_1, D_2)$ is designed to be short so that a developer can understand the problematic inputs and trace them through the code of the mechanism $M$. For this reason, the code of $M$ will also run fast, so that it will be possible to later evaluate $M(D_1,args)$ and $M(D_2, args)$ multiple times very quickly.

\begin{algorithm}[!ht]
\SetKwProg{Fn}{function}{\string:}{}
\SetKwFunction{Wrapper}{CounterExampleDetection}
\SetKwInOut{Input}{input}
\DontPrintSemicolon
\Fn{\Wrapper{$M$, $\epsilon$}}{
\Input{
$M$: mechanism\\ 
$\epsilon$: desired privacy (is $M$ $\epsilon$-differentially private?)\\
}
\tcp{get set of possible inputs: $(D_1, D_2, args)$}
InputList $\gets$ \texttt{InputGenerator}($M$, $\epsilon$)\;
$E, D_1, D_2, args \gets$ \texttt{EventSelector}($ M, \epsilon,$ InputList)\;
$p_{\top}, p_{\bot} \gets$ \texttt{HypothesisTest}($ M, \epsilon,  D_1, D_2, args, E$)\;\label{line:hypotest}
\Return{$p_{\top}, p_{\bot}$}
}
\caption{Overview of Counterexample Generator \label{alg:detection}}
\end{algorithm}

The next step is the EventSelector. It takes each tuple $(D_1, D_2, args)$ from InputList and runs $M(D_1, args)$ and $M(D_2, args)$ multiple times. Based on the type of the outputs, it generates a set of candidates for $E$. For example, if the output is a real number, then the set of candidates is the set of intervals $(a, b)$. For each candidate $E$ and each tuple $(D_1, D_2, args)$, it counts how many times $M(D_1,args)$ produced an output $\omega\in E$ and how many times $M(D_2,args)$ produced an output in $E$. Based on these results, it picks one specific $E$ and one tuple $(D_1, D_2, args)$ which it believes is most likely to show a violation of $\epsilon_0$-differential privacy.

Finally, the HypothesisTest takes the selected $E$, $D_1$, $D_2$, and args and checks if it can detect statistical evidence that $P(M(D_1, args)\in E) > e^\epsilon P(M(D_2, args)\in E)$ -- which corresponds to the $p$-value $p_{\top}$ -- or $e^\epsilon P(M(D_1, args)\in E) <  P(M(D_2, args)\in E)$ -- which corresponds to the $p$-value $p_{\bot}$.

It is important to note that the EventSelector also uses HypothesisTest internally as a sub-routine to filter out candidates. That is, for every candidate $E$ and every candidate ($D_1$, $D_2$, args), it runs HypothesisTest and treats the returned value as a score. The combination of $E$ and $(D_1, D_2, args)$ with the best score is returned by the EvenSelector. Note that EventSelector is using the HypothesisTest in an exploratory way -- it evaluates many hypotheses and returns the best one it finds. This is why the $E$ and ($D_1$, $D_2$, args) that are finally chosen need to be evaluated again on Line \ref{line:hypotest} using fresh samples from $M$.

\paragraph{Interpreting the results}

\begin{figure*}[!ht]
\begin{subfigure}[b]{0.33\linewidth}
\captionsetup{width=.9\linewidth}
\includegraphics[width=\linewidth]{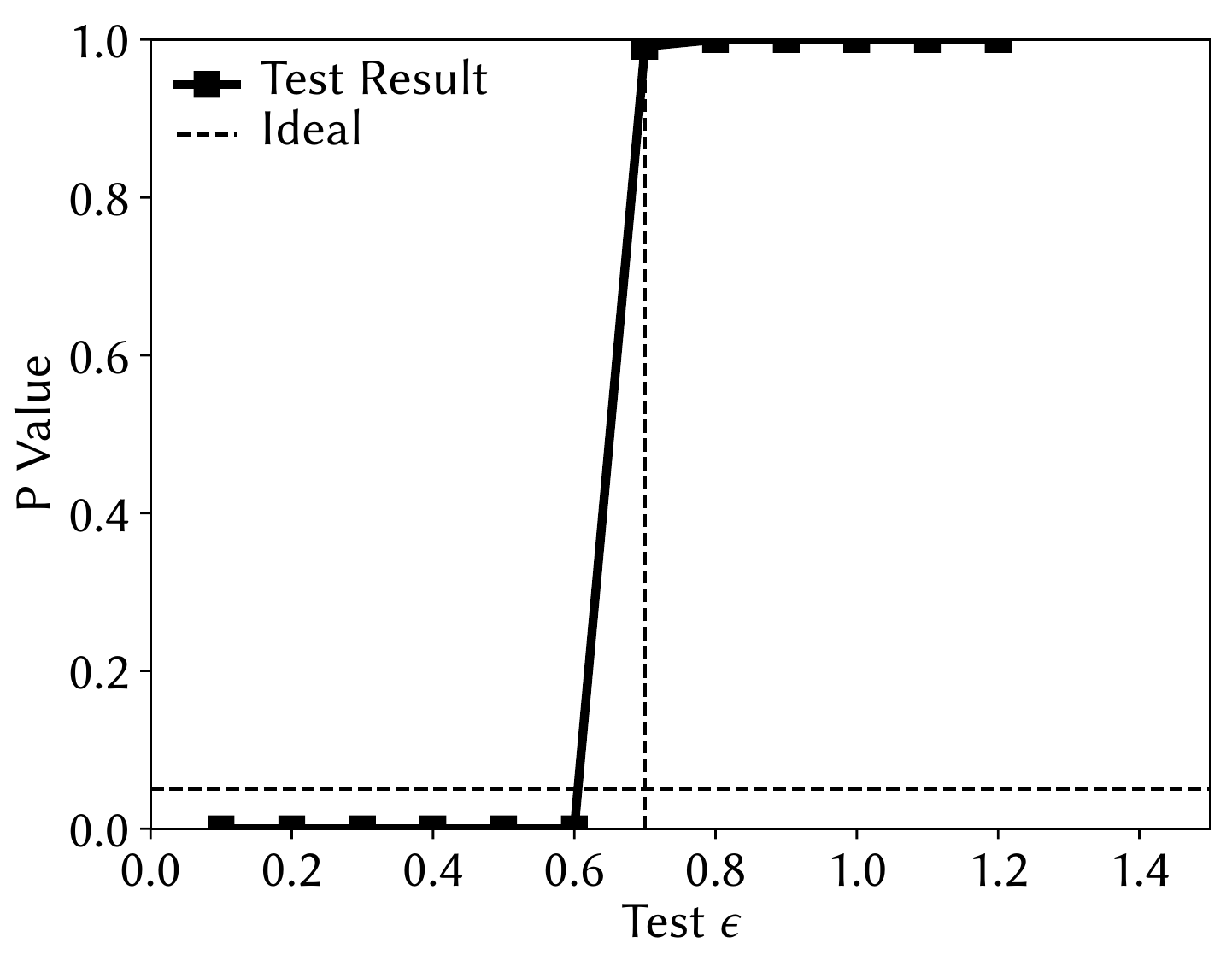}
\caption{Expected results for algorithms correctly claiming $\epsilon_0 = 0.7$ differential privacy.\label{fig:demo1}}
\end{subfigure}
\begin{subfigure}[b]{0.33\linewidth}
\captionsetup{width=.9\linewidth}
\includegraphics[width=\linewidth]{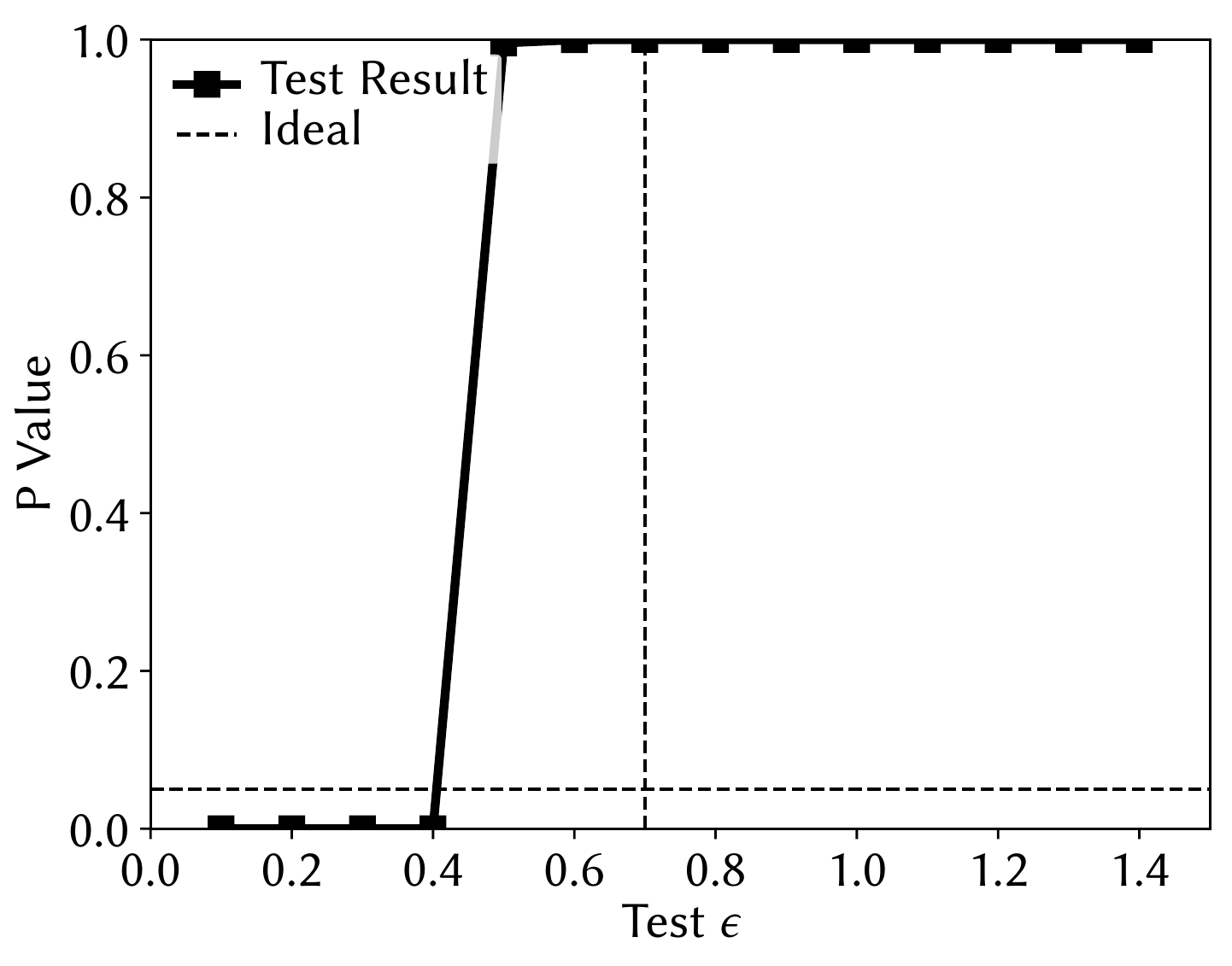}
\caption{Expected results where counterexamples cannot be found or when $M$ satisfies differential privacy for $\epsilon < \epsilon_0=0.7$. \label{fig:demo2}}
\end{subfigure}~
\begin{subfigure}[b]{0.33\linewidth}
\includegraphics[width=\linewidth]{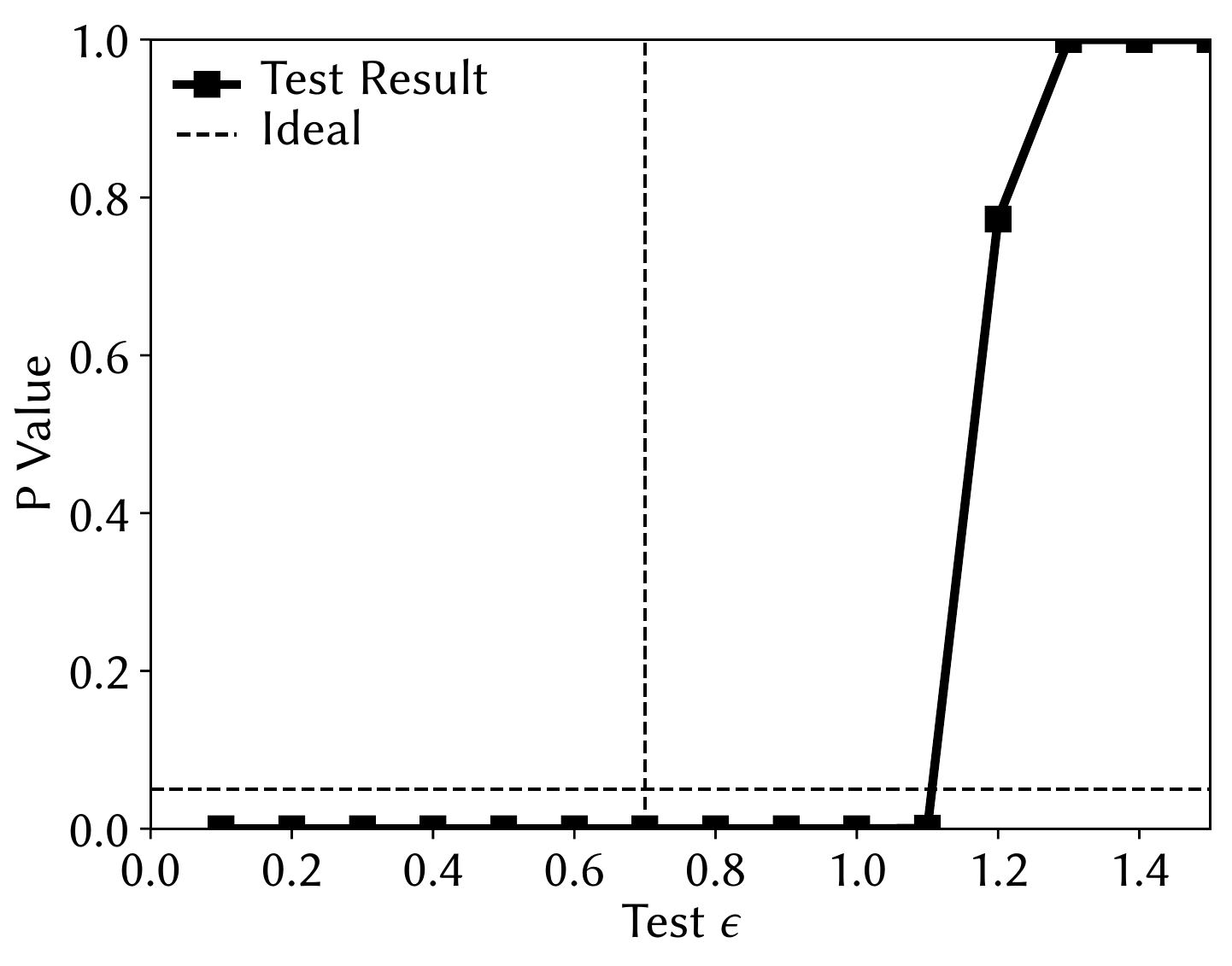}
\caption{Expected results where $M$ does not satisfy $0.7$-differential privacy (i.e. $M$ has a bug and provides less privacy than advertised). \label{fig:demo3}}
\end{subfigure}
\caption{Interpreting experimental results on hypothesis tests. A hypothetical algorithm $M$ claims to achieve $\epsilon_0$-differential privacy. For each $\epsilon$ from $0.2$ to $2.4$ we would evaluate if $M$ satisfies $\epsilon$-differential privacy. We show a typical graph when $M$ does satisfy $\epsilon_0$-differential privacy (left), a graph where $M$ possibly provides more privacy (center) and a graph where $M$ provides less privacy than advertised. \label{fig:demo}}
\end{figure*}

One of the best ways of understanding the behavior of the counterexample generator is to look at the p-values it outputs. That is, we take an mechanism $M$ that claims to satisfy $\epsilon_0$-differential privacy and, for each $\epsilon$ close to $\epsilon_0$, we test whether it satisfies $\epsilon$-differential privacy (that is, even though $M$ claims to satisfy $\epsilon_0$-differential privacy, we may want to test if it satisfies $\epsilon$-differential privacy for some other value of $\epsilon$ that is close to $\epsilon_0$). The hypothesis tester returns two $p$-values: 

\begin{itemize}[leftmargin=5mm]
\item $p_{\top}$. Small values indicate that probably $P(M(D_1, args)\in E) > e^\epsilon P(M(D_2, args)\in E)$.
\item $p_{\bot}$. Small values indicate that probably $P(M(D_2, args)\in E) > e^\epsilon P(M(D_1, args)\in E)$.
\end{itemize}

For each $\epsilon$, we plot the minimum of $p_{\top}$ and $p_{\bot}$. Figure \ref{fig:demo} shows typical results that would appear when the counterexample detector is run with real mechanisms $M$ as input. 

In Figure \ref{fig:demo1}, $M$ correctly satisfies the claimed $\epsilon_0=0.7$ differential privacy. In that plot, we see that the $p$-values corresponding to $\epsilon=0.2, 0.4, 0.6$ are very low, meaning that the counterexample generator can prove that the algorithm does not satisfy differential privacy for those smaller values of $\epsilon$. Near $0.7$ it becomes difficult to find counterexamples; that is, if an algorithm satisfies $0.7$-differential privacy, it is very hard to statistically prove that it does not satisfy $0.65$ differential privacy. This is a typical feature of hypothesis tests as it becomes difficult to reject the null hypothesis when it is only slightly incorrect (e.g., when the true privacy parameter is only slightly different from the $\epsilon$ we are testing). Now, any algorithm that satisfies $0.7$-differential privacy also satisfies $\epsilon$-differential privacy for all $\epsilon \geq 0.7$. This behavior is seen in Figure \ref{fig:demo1} as the $p$-values are large for all larger values of $\epsilon$.

Figure \ref{fig:demo2} shows a graph that can arise from two distinct scenarios. One of the situations is when the mechanism $M$ claims to provide $0.7$-differential privacy but actually provides more privacy (i.e. $\epsilon$-differential privacy for $\epsilon < 0.7$). In this figure, the counterexample generator could prove, for example, that $M$ does not satisfy $0.4$-differential privacy, but leaves open the possibility that it satisfies $0.5$-differential privacy. The other situation is when our tool has failed to find good counterexamples. Thus when a mechanism is correct, good precision by the counterexample generator means that the line starts rising close to (but before the dotted line), and worse precision means that the line starts rising much earlier.

Figure \ref{fig:demo3} shows a typical situation in which an algorithm claims to satisfy $0.7$-differential privacy but actually provides less privacy than advertised. In this case, the counterexample generator can generate good counterexamples at $\epsilon=0.7$ (the dotted line) and even at much higher values of $\epsilon$. When an mechanism is incorrect, such a graph indicates good precision by the counterexample generator.

\paragraph{Limitations}
In some cases, finding counterexamples requires a large input datasets. In those cases, searching for the right inputs and running algorithms on them many times will impact the ability of our counterexample generator to find counterexamples. This is a limitation of all techniques based on statistical tests.

Another important case where our counterexample generator is not expected to perform well is when violations of differential privacy happen very rarely. For example, consider a mechanism $M$ that checks if its input is $D_1=[1]$. If so, with probability $e^{-9}$ it outputs $1$ and otherwise it outputs $0$ (if the input is not $1$, $M$ always outputs $0$). $M$ does not satisfy $\epsilon$-differential privacy for any value of $\epsilon$. However, showing it statistically is very difficult. Supposing $D_1=[1]$ and $D_2=[0]$ are adjacent databases, it requires running $M(D_1)$ and $M(D_2)$ billions of times to observe that an output of $1$ is possible under $D_1$ but is at least $e^\epsilon$ times less likely under $D_2$.

Addressing both of these problems will likely involve incorporation of program analysis, such as symbolic execution, into our statistical framework and is a direction for future work.

\subsection{Hypothesis Testing}\label{subsec:hypotest} 
\begin{algorithm}[!ht]
\SetKwProg{Fn}{function}{\string:}{}
\SetKwFunction{Test}{pvalue}
\SetKwFunction{Wrapper}{HypothesisTest}
\SetKwInOut{Input}{input}
\DontPrintSemicolon
\Fn{\Test{$c_1$, $c_2$, $n$, $\epsilon$}}{
$\tilde{c}_1\gets \text{Binomial}(c_1, 1/e^\epsilon)$\;
s $\gets \tilde{c_1} + c_2$\;
\Return $1 - \text{Hypergeom.cdf}(\tilde{c}_1 - 1 \mid 2n, n, s)$
}
\Fn{\Wrapper{$n$, $M$, $args$, $\epsilon$, $D_1$, $D_2$ , $E$}}{
\Input{
$M$: mechanism\\
$args$: additional arguments for $M$ \\
$\epsilon$: privacy budget to test\\
$D_1$, $D_2$: adjacent databases\\
$E$: Event
}
$\mathcal{O}_1 \gets$ results of running $M(D_1,  args)$ for $n$ times\;
$\mathcal{O}_2 \gets$ results of running $M(D_2, args)$ for $n$ times\;
$c_1 \gets$ $\lvert \{i  \mid \mathcal{O}_1[i] \in E \}\rvert$\;
$c_2 \gets$ $\lvert \{i   \mid \mathcal{O}_2[i] \in E \}\rvert$\;

$p_{\top} \gets$\Test($c_1, c_2, n, \epsilon$)\;
$p_{\bot} \gets$\Test($c_2, c_1, n, \epsilon$)\;

\Return $p_{\top}, p_{\bot}$\;
} 
\caption{Hypothesis Test. Parameter $n$: \# of iterations\label{alg:hypotest}}
\end{algorithm}

Suppose we have a mechanism $M$, inputs $D_1, D_2$ and an output set $E$ (we discuss the generation of $D_1,D_2$ in Section \ref{subsec:input} and $E$ in Section \ref{sec:event_selection}). We would like to check if $P(M(D_1)\in E) > e^\epsilon P(M(D_2)\in E)$ or if  $P(M(D_2)\in E) > e^\epsilon P(M(D_1)\in E)$, as that would demonstrate a violation of $\epsilon$-differential privacy. We treat the  $P(M(D_1)\in E) > e^\epsilon P(M(D_2)\in E)$ case in this section, as the other case is symmetric.

To do this, the high level idea is to:
\begin{itemize}[leftmargin=5mm]
\item Define $p_1 = P(M(D_1)\in E)$ and $p_2=P(M(D_2)\in E)$
\item Formulate the null hypothesis as $H_0: p_1 \leq e^\epsilon \cdot p_2$ and the alternative as $H_1: p_1 > e^\epsilon \cdot p_2$. 

\item Run $M$ with inputs $D_1$ and $D_2$ independently $n$ times each. Record the results as $\mathcal{O}_1$ and $\mathcal{O}_2$.

\item Count the number of times the result falls in $E$ in each case. Let $c_1 = \lvert \{i \mid \mathcal{O}_1[i] \in E \}\rvert$ and $c_2 = \lvert \{i \mid \mathcal{O}_2[i] \in E \}\rvert$. Intuitively, $c_1 \gg e^\epsilon c_2$ provides strong evidence against the null hypothesis.

\item Calculate a $p$-value based on $c_1$, $c_2$ to determine how unlikely the null hypothesis is.
\end{itemize}
The challenge is, of course, in the last step as we don't know what $p_1$ and $p_2$ are. One direction is  to estimate them from $c_1$ and $c_2$. However, it is also challenging to estimate the variance of our estimates $\hat{p}_1$ and $\hat{p}_2$ (the higher the variance, the less the test should trust the estimates). 

Instead, we take a different approach that allows us to conduct the test without knowing what $p_1$ and $p_2$ are.
First, we note that $c_1$ and $c_2$ are equivalent to  samples from a Binomial$(n, p_1)$ distribution and a Binomial$(n, p_2)$ distribution respectively. We first consider the border case where $p_1=e^\epsilon p_2$. Consider sample $\tilde{c}_1$ from a Binomial$(c_1,1/e^{\epsilon})$ distribution. We note that this sample enjoys the following property  (which implies that in the border case, $\tilde{c}_1$ has the same distribution as $c_2$):

\begin{lemma}Let $X\sim $Binomial$(n, p_1)$ and $Z$ be generated from $X$ by sampling from the Binomial$(X, 1/e^\epsilon)$ distribution. The marginal distribution of $Z$ is Binomial$(n, p_1/e^\epsilon)$. %
\end{lemma}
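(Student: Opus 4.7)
The cleanest route is a coupling argument via Bernoulli decomposition. The plan is to write $X$ as the sum of $n$ i.i.d.\ Bernoulli$(p_1)$ random variables $Y_1,\dots,Y_n$, so that $X$ counts the ``successes'' among $n$ trials. Sampling $Z$ from Binomial$(X, 1/e^\epsilon)$ is then equivalent to independently flipping, for each of the $X$ successes, a second coin with success probability $1/e^\epsilon$ and counting how many of these secondary flips succeed. Concretely, I introduce auxiliary independent Bernoulli$(1/e^\epsilon)$ variables $B_1,\dots,B_n$ (independent of the $Y_i$'s) and observe that $Z$ has the same distribution as $\sum_{i=1}^n Y_i B_i$. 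Each product $Y_i B_i$ is a Bernoulli trial with success probability $p_1 \cdot (1/e^\epsilon) = p_1/e^\epsilon$, and the $n$ products are mutually independent, so the sum is Binomial$(n, p_1/e^\epsilon)$, which is precisely the claim.

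If a more elementary verification is preferred, I would fall back on a direct computation using the law of total probability. For $0 \le k \le n$,
\[
P(Z = k) = \sum_{x=k}^{n} \binom{n}{x} p_1^x (1-p_1)^{n-x} \binom{x}{k} (1/e^\epsilon)^k (1 - 1/e^\epsilon)^{x-k}.
\]
The standard identity $\binom{n}{x}\binom{x}{k} = \binom{n}{k}\binom{n-k}{x-k}$ allows me to pull $\binom{n}{k}(p_1/e^\epsilon)^k$ outside the sum and reindex $j = x - k$. The remaining sum becomes $\sum_{j=0}^{n-k} \binom{n-k}{j} \bigl(p_1(1 - 1/e^\epsilon)\bigr)^j (1-p_1)^{n-k-j}$, which by the binomial theorem equals $(1 - p_1/e^\epsilon)^{n-k}$. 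This collapses the expression to $\binom{n}{k}(p_1/e^\epsilon)^k (1 - p_1/e^\epsilon)^{n-k}$, the desired Binomial$(n, p_1/e^\epsilon)$ pmf.

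I do not expect any real obstacle: the result is a textbook ``thinning'' identity for the binomial distribution, and the only mild subtlety is making the conditional construction $Z \mid X \sim \text{Binomial}(X, 1/e^\epsilon)$ rigorous (handled automatically by the coupling). I would present the Bernoulli-decomposition proof as the primary argument because it makes the intuition transparent and generalizes to the later use in Algorithm~\ref{alg:hypotest}, where $\tilde{c}_1$ is meant to serve as a surrogate for $c_2$ in the border case $p_1 = e^\epsilon p_2$.
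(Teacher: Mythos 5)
Your primary argument (Bernoulli decomposition of $X$ into $n$ independent trials, thinning each success with an independent Bernoulli$(1/e^\epsilon)$ coin, and summing the products) is exactly the proof the paper gives, and it is correct. The direct pmf computation you offer as a fallback is also valid but unnecessary; no gaps here.
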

\begin{proof}
The relationship between Binomial and Bernoulli random variables means that $X=\sum_{i=1}^n X_i$, where $X_i$ is a Bernoulli($p_1$)
 random variable. Generating $Z$ from $X$ is the same as doing the following:  set $Z_i=0$ if $X_i=0$. If $X_i=1$, set $Z_i=1$ with probability $1/e^\epsilon$ (and set $Z_i=0$ otherwise). Then set $Z=\sum_{i=1}^n Z_i$. Hence, the marginal distribution of $Z_i$ is a Bernoulli$(p_1/e^\epsilon)$ random variable:
{\small
\begin{align*}
P(Z_i = 1) &= P(Z_i = 1\mid X_i = 1) P(X_i = 1) + P(Z_i = 1\mid X_i=0) P(X_i=0)\\
&= (1/e^\epsilon)\cdot p_1 + 0\cdot (1-p_1)= p_1/e^\epsilon
\end{align*}
}
This means that the marginal distribution of $Z$ is Binomial$(n,p_1/e^\epsilon)$.
\end{proof}

Thus we have the following facts that follow immediately from the lemma:
\begin{itemize}[leftmargin=5mm]
\item If $p_1 > e^\epsilon p_2$ then the distribution of $\tilde{c}_1$ is Binomial$(n, \tilde{p}_1)$ with $\tilde{p}_1= p_1/e^\epsilon$ and so has a larger Binomial parameter than $c_2$ (which is Binomial$(n, p_2))$. We want our test to be able to reject the null hypothesis in this case.
\item If $p_1 = e^\epsilon p_2$ then the distribution of $\tilde{c}_1$ is Binomial$(n, \tilde{p}_1)$ with $\tilde{p}_1=p_2$ and so has the same Binomial parameter as $c_2$. We do not want our test to reject the null hypothesis in this case.
\item If $p_1 < e^\epsilon p_2$ then the distribution of $\tilde{c}_1$ is Binomial$(n, \tilde{p}_1)$ with $\tilde{p}_1=p_1/e^\epsilon$ and so has a smaller Binomial parameter than $c_2$ (which is Binomial$(n, p_2))$. We do not want to reject the null hypothesis in this case.
\end{itemize}

Thus, by randomly generating $\tilde{c}_1$ from $c_1$, we have (randomly) reduced the problem  of 
testing $p_1 > e^\epsilon p_2$ vs. $p_1 \leq e^\epsilon p_2$ (on the basis of $c_1$ and $c_2$) to the problem of testing 
$\tilde{p}_1 > p_2$ vs. $\tilde{p}_1 \leq p_2$ (on the basis of $\tilde{c}_1$ and $c_2$).
Now, checking whether $\tilde{c}_1$ and $c_2$ come from the same distribution can be done with the Fisher's exact test (see Section \ref{sec:background}): the $p$-value is $1 - \text{Hypergeom.cdf}(\tilde{c}_1 - 1\mid 2n, n, \tilde{c}_1+c_2)$.\footnote{Here we use a notation from SciPy \cite{scipy} package where Hypergeom.cdf means the cumulative distribution function of hypergeometric distribution.} This is done in the function \emph{pvalue} in Algorithm \ref{alg:hypotest}.

To summarize, given $c_1$ and $c_2$, we first sample $\tilde{c}_1$ from the Binomial$(c_1, 1/e^\epsilon)$ distribution and then return the p-value of $(1 - \text{Hypergeom.cdf}(\tilde{c}_1 - 1 \mid 2n, n, \tilde{c}_1+c_2))$.
Since this is a random reduction, we reduce its variance by sampling $\tilde{c}_1$ multiple times and averaging the p-values. That is, we run the p-value function (Algorithm \ref{alg:hypotest}) multiple times with the same inputs and average the p-values it returns.

\subsection{Event Selection} \label{sec:event_selection}
Having discussed how to test if $P(M(D_1)\in E) > e^\epsilon P(M(D_2)\in E)$ or if  $P(M(D_2)\in E) > e^\epsilon P(M(D_1)\in E)$ when $D_1,D_2$, and $E$ were pre-specified, we now discuss how to select the event $E$ that is most likely to show violations of $\epsilon$-differential privacy.

\begin{algorithm}[!ht]
\SetKwProg{Fn}{function}{\string:}{}
\SetKwFunction{Wrapper}{EventSelector}
\SetKwInOut{Input}{input}
\DontPrintSemicolon
\Fn{\Wrapper{$n$, $M$, $\epsilon$, InputList}}{
\Input{
$M$: mechanism  \\
$InputList$: possible inputs \\
$\epsilon$: privacy budget to test
}
$pvalues \gets [\ ]$\;
$results \gets [\ ]$\;
\ForEach{$(D_1, D_2, args) \in InputList$} {
$SearchSpace \gets $ search space based on return type\;
$\mathcal{O}_1 \gets$ results of running $M(D_1,  args)$ for $n$ times\;
$\mathcal{O}_2 \gets$ results of running $M(D_2, args)$ for $n$ times\;
\ForEach{$E \in SeachSpace$}{
$c_1 \gets$ $\lvert \{i  \mid \mathcal{O}_1[i] \in E \}\rvert$\;
$c_2 \gets$ $\lvert \{i   \mid \mathcal{O}_2[i] \in E \}\rvert$\;

$p_{\top} \gets$\Test($c_1, c_2, n, \epsilon$)\;\label{line:start}
$p_{\bot} \gets$\Test($c_2, c_1, n, \epsilon$)\;

$pvalues$.append($min$($p_{\top}$, $p_{\bot}$))\;
$results$.append($D_1, D_2, args, E$)\;\label{line:end}
}
}

\Return{$results$[argmin(pvalues)]}\;
}
\caption{Event Selector. Parameter $n$: \# of iterations \label{alg:event_selector}}
\end{algorithm}

One of the challenges is that different mechanisms could have different output types (e.g., a discrete number, a vector of numbers, a vector of categorical values, etc.). To address this problem, we define a search space $S$ of possible events to look at. The search space depends on the type of the output $\omega$ of $M$, which can be determined by running $M(D_1)$ and $M(D_2)$ multiple times.

\begin{enumerate}[leftmargin=5mm]

\item \textbf{The output $\omega$ is a fixed length list of categorical values.} 
We first run $M(D_1)$ once and ask it to not use any noise (i.e. tell it to satisfy $\epsilon$-differential privacy with $\epsilon=\infty$). Denote this output as $\omega_0$. Now, when $M$ runs with its preferred privacy settings to produce an output $\omega$, we define $t(\omega)$ be the Hamming distance between the output $\omega$ and  $\omega_0$. The search space is \[S = \{\{\omega \mid t(\omega) = k\} : k= 0, 1, \ldots, l \}\] where $l$ is the fixed length of output of $M$.
Another set of events relate to the count of a categorical value in the output. If there are $m$ values, then define
\[
S_i = \{\{\omega\mid\omega.count(value_i) = k\}: k= 0, 1, \ldots, l\},
\]
$1\leq i\leq m$. The overall search space is the union of $S$ and all $S_i$.

\item \textbf{The output $\omega$ is a variable length list of categorical values. }
In this case, one extra set of events $E$ we look at correspond to the length of the output. For example, we may check if $P(M(D_1) \text{ has length k}) > P(M(D_2) \text{ has length k})$. Hence, we define 
\[S_0 = \{\{\omega \mid \omega.length = k\} : k= 0, 1, \ldots \}\] 
For the search space $S$, we use this $S_0$ unioned with the search space from the previous case.

\item \textbf{The output $\omega$ is a fixed length list of numeric values. }

In this case, the output is of the form $\omega = (a_1, \ldots, a_m)$. Our search space is the union of the following: 
\begin{align*}
&\{\{\omega\mid\omega[i] \in (a,b)\}: i=1,\dots, m \text{ and } a<b\}\},\\
&\{\{\omega\mid\text{avg}(\omega) \in (a,b)\}:  a<b\}\},\\
&\{\{\omega\mid\text{min}(\omega) \in (a,b)\}:  a<b\}\},\\
&\{\{\omega\mid\text{max}(\omega) \in (a,b)\}:  a<b\}\}.
\end{align*}
That is,  we would end up checking if $P(avg(M(D_1)) \in (a,b)) > P(avg(M(D_2)) \in (a,b))$, etc.. To save time, we often restrict $a$ and $b$ to be multiples of a small number like $\pm0.2$, or $\pm\infty$. In the case that the output $\omega$ is always an integer array, we replace the condition ``$\in (a,b)$'' with ``$=k$'' for each integer $k$.
\item \textbf{$M$ outputs a variable length list of numeric values.} 

The search space is the union of Case 3 and $S_0$ in Case 2.

\item \textbf{$M$ outputs a variable length list of mixed categorical and numeric values. }
In this case, we separate out the categorical values from numeric values and use the cross product of the search spaces for numeric and categorical values. For instance, events would be of the form ``$\omega$  has $k$ categorical components equal to $\ell$ and the average of the numerical components of $\omega$ is in $(a,b)$''

\end{enumerate}

The EventSelector is designed to return one event $E$ for use in the hypothesis test in Algorithm \ref{alg:detection}. The way EventSelector works is it receives an InputList, which is a set of tuples $(D_1, D_2, args)$ where $D_1,D_2$ are adjacent databases and args is a set of values for any other parameters $M$ needs. For each such tuple, it runs $M(D_1)$ and $M(D_2)$ for $n$ times each. Then  for each possible event in the search space, it runs the hypothesis test (as an exploratory tool) to get a p-value. The combination of $(D_1,D_2,args)$ and $E$ that produces the lowest p-value is then returned to Algortihm \ref{alg:detection}. Algorithm \ref{alg:detection} uses those choices to run the real hypothesis test on fresh executions of $M$ on $D_1$ and $D_2$.

The pseudocode for the EventSelector is shown in Algorithm \ref{alg:event_selector}.\footnote{In practice, to avoid choosing bad $E$, we let $c_E$ be the total number of times $M(D_1)$ and/or $M(D_2)$ produced an output in $E$. Then it only executes Line \ref{line:start}-\ref{line:end} in Algorithm \ref{alg:event_selector} if $c_E \geq 0.001 \cdot n\cdot  e^\epsilon$, otherwise the selection of $E$ is too noisy.}

\subsection{Input Generation}\label{subsec:input}

\begin{algorithm}[h]
\SetKwProg{Fn}{function}{\string:}{}
\SetKwFunction{InputGenerator}{InputGenerator}
\SetKwFunction{ArgumentGenerator}{ArgumentGenerator}
\SetKwInOut{Input}{input}
\DontPrintSemicolon

\Fn{\ArgumentGenerator{$M$, $D_1, D_2$}}{
$args_0 \gets$ Arguments used in noise generation with values that minimize the noises\; 
$constraints \gets$ Traverse the source code of $M$ and generate constraints to force $D_1$ and $D_2$ to diverge on branches\;
$args_1 \gets$ \texttt{MaxSMT(constraints)}\;
\Return $args_0 + args_1$
}

\Fn{\InputGenerator{$M$, $len$}}{
\Input{
$M$: mechanism\\
$len$: length of input to generate
}
$candidates \gets$ Empirical pairs of databases of length $len$\;

$InputList \gets [\ ]$\;

\ForEach{$(D_1, D_2) \in candidates$}{
$args \gets \texttt{ArgumentGenerator}(M, D_1, D_2)$\;
$InputList$.append($D_1, D_2, args$)\;
}

\Return{$InputList$}
}
\caption{Input Generator.\label{algo:database_generator}}
\end{algorithm}

In this section we discuss our approaches for generating candidate tuples $(D_1, D_2, args)$ where $D_1,D_2$ are adjacent databases and args is a set of auxiliary parameters that a mechanism $M$ may need.

\subsubsection{Database Generation \label{sec:query_generation}}
To find the adjacent databases that are likely to form the basis of counterexamples that illustrate violations of differential privacy, we adopt a simple and generic approach that works surprisingly well.  Recalling that the inputs to mechanisms are best modeled as a vector of query answers, we use the type of patterns shown in Table \ref{tab:query_category}.

\begin{table}[ht]
\caption{Database categories and samples \label{tab:query_category}}
\begin{center}
\begin{tabular}{c c c c} 
\thickhline
\textbf{Category} & \textbf{Sample D1} & \textbf{Sample D2} \\
\hline
One Above & [1, 1, 1, 1, 1] & [2, 1, 1, 1, 1] \\
One Below & [1, 1, 1, 1, 1] & [0, 1, 1, 1, 1] \\
One Above Rest Below &  [1, 1, 1, 1, 1] & [2, 0, 0, 0, 0] \\
One Below Rest Above & [1, 1, 1, 1, 1] & [0, 2, 2, 2, 2]\\
Half Half & [1, 1, 1, 1, 1] & [0, 0, 0, 2, 2] \\
All Above \& All Below & [1, 1, 1, 1, 1] & [2, 2, 2, 2, 2]\\
X Shape & [1, 1, 0, 0, 0] & [0, 0, 1, 1, 1]\\
\thickhline
\end{tabular}
\end{center}
\end{table}

The ``One Above'' and ``One Below'' categories are suitable for algorithms whose input is a histogram (i.e. in adjacent databases, at most one query can change, and it will change by at most 1). The rest of the categories are suitable when in adjacent databases every query can change by at most one (i.e. the queries have sensitivity\footnote{For queries with larger sensitivity, the extension is obvious. For example $D_1=[1,1,1,1,1]$ and $D_2=[1+\Delta_q, 1+\Delta_q, \dots, 1+\Delta_q]$} $\Delta_q=1$).

The design of the categories is based on the wide variety of changes in query answers that are possible when evaluated on one database and on an adjacent database. For example, it could be the case that a few of the queries increase (by 1, if their sensitivity is 1, or by $\Delta_q$ in the general case) but most of them decrease. A simple representative of this situation is ``One Above Rest Below'' in which one query increases and the rest decrease. The category ``One Below Rest Above'' is the reverse.

Another situation is where roughly half of the queries increase and half decrease (when evaluated on a database compared to when evaluated on an adjacent database). This scenario is captured by the ``Half Half'' category. Another situation is where all of the queries increase. This is captures by the ``All Above \& All Below'' category. Finally, the ``X Shape'' category captures the setting where the query answers are not all the same and some increase and others decrease when evaluated on one database compared to an adjacent database.

These categories were chosen from our desire to allow counterexamples to be easily understood by mechanism designers (and to make it easier for them to manually trace the code to understand the problems). Thus the samples are short and simple. We consider inputs of length 5 (as in Table \ref{tab:query_category}) and also versions of length 10.

\subsubsection{Argument Generation}
Some differentially-private algorithms require extra parameters beyond the
database. For example, the sparse vector technique \cite{dwork2014algorithmic},
shown in Algorithm \ref{algo:sparse_vector_lyu}, takes as inputs a threshold
$T$ and a bound $N$. It tries to output numerical queries that are larger than
$T$. However, for privacy reasons, it will stop after it returns $N$ noisy
queries whose values are greater than $T$. These two arguments are specific to
the algorithm and their proper values depend on the desired privacy level as
well as algorithm precision.

To find values of auxiliary parameters (such as $N$ and $T$ in Sparse Vector),
we build argument generator based on \textit{Symbolic Execution}
\cite{king1976symbolic}, which is typically used for bug finding: it generates
concrete inputs that violate assertions in a program. In general, a symbolic
executor assigns symbolic values, rather than concrete values as normal
execution would do, for inputs. As the execution goes, the executor maintains a
symbolic program state at each assertion and generates constraints that will
violate the assertion. When those constraints are satisfiable, concrete inputs
(i.e., a solution of the constraints) are generated.

Compared with standard symbolic execution, a major difference in our argument
generation is that we are interested in algorithm arguments that will likely
maximize the privacy cost of an algorithm. In other words, there is no obvious
assertion to be checked in our argument generation. To proceed, we use two
heuristics that likely will cause large privacy cost of an algorithm:
\begin{itemize}[leftmargin=5mm]
\item The first heuristic applies to parameters that affect noise generation.
For example in Sparse Vector, the algorithm adds Lap($2 \cdot N \cdot
\Delta_q/\epsilon_0$) noise. For such a variable, we use the value that results
in small amount of noise (i.e., $N=1$). Small amount of noise is favorable
since it reduces the variance in the hypothesis testing
(Section~\ref{subsec:hypotest}).

\item The second heuristic (for variables that do not affect noise)  prefers
arguments that make two program executions using two different databases (as
described in Section~\ref{sec:query_generation}) to take as many diverging
branches as possible. The reason is that diverging branches will likely use
more privacy budget. 
\end{itemize}

Next, we give a more detailed overview of our customized symbolic executor.
The symbolic executor takes a pair of \emph{concrete} databases as inputs (as
described in Section~\ref{sec:query_generation}) and uses \emph{symbolic}
values for other input parameters.  Random samples in the program (e.g., a
sample from Laplace distribution) are set to value 0 in the symbolic execution.
Then, the symbolic executor tracks symbolic program states along program
execution in the standard way~\cite{king1976symbolic}. For example, the
executor will generate a constraint\footnote{For simplicity, we use a simple
representation for constraints; Z3 has an internal format and a user can either
use Z3's APIs or SMT2 \cite{ranise2006smt} format to represent constraints.} $x
= y+1$ after an assignment  (x $\leftarrow$ y+1), assuming that variable y has a symbolic
value $y$ before the assignment. Also, the executor will unroll loops in the
source code, which is standard in most symbolic executors.

Unlike standard symbolic executors, the executor conceptually tracks a pair of
symbolic program states along program execution (one on concrete database
$D_1$, and one on concrete database $D_2$). Moreover, it also generate extra
constraints, according to the two heuristics above, in the hope of maximizing
the privacy cost of an algorithm. In particular, it handles two kinds of
statements in the following way:
\begin{itemize}[leftmargin=5mm]
\item \textbf{Sampling.} The executor generates two constraints for a sampling
statement: a constraint that eliminates randomness in symbolic execution by
assigning sample to value 0, and a constraint that ensures a small amount of
noise. Consider a statement ($\eta\ \leftarrow\ Lap(e)$). The executor
generates two constraints: $\eta = 0$ as well as a constraint that minimizes
expression $e$.

\item \textbf{Branch.} The executor generates a constraint that makes the two
executions diverge on branches. Consider a branch statement ($\textbf{if}\
\text{e}\ \textbf{then} \cdots$). Assume that the executor has symbolic values
$e_1$ and $e_2$ for the value of expression e on databases $D_1$ and $D_2$
respectively; it will generates a constraint $(e_1 \land \neg e_2) \lor (\neg
e_1 \land e_2)$ to make the executions diverge. Note that unlike other
constraints, a diverging constraint might be unsatisfiable (e.g., if the
query answers under $D_1$ and $D_2$ are the same). However, our goal is to
\emph{maximize} the number of satisfiable diverging constraints, which can be
achieved by a MaxSMT solver.
\end{itemize}

The executor then uses an external MaxSMT solver such as Z3 \cite{de2008z3} on
all generated constraints to find arguments that maximizes the number of
diverged branches. 

For example, the correct version of the Sparse Vector algorithm (see the complete algorithm in Algorithm \ref{algo:sparse_vector_lyu}) has the parameter $T$ (a threshold). It has a branch that tests whether the noisy query answer is above the threshold $T$:
\[q+\eta_2 \geq \hat{T}\]
Here, $\eta_2$ is a noise variable, $q$ is one query answer (i.e. one of the components of the input $D_1$ of the algorithm) and $\hat{T}$ is a noisy threshold ($\hat{T}=T+\eta_1$). Suppose we start from a database candidate ([1, 1, 1, 1, 1], [2, 2, 2, 2, 2]). The symbolic executor assigns symbolic values to the parameters $T$ and unrolls the loop in the algorithm, where each iteration handles one noisy query. Along the execution, it updates program states. For example, statement $\hat{T} \gets T + \eta_1$ results in  $\hat{T} = T + \eta_1$.
For the first execution of the branch of interest, the executor tracks the following symbolic program state: 
\[q_1=1 \wedge q_2 = 2 \wedge \eta_1=0 \wedge \eta_2=0 \wedge \hat{T_1}=T+\eta_1 \wedge \hat{T_2}=T+\eta_2\]
as well as the following constraint for diverging branches:
\[(q_1 + \eta_1 \geq \hat{T_1} \wedge q_2 +\eta_2 < \hat{T_2})
\lor (q_1 + \eta_1 < \hat{T_1} \wedge q_2 +\eta_2 \geq \hat{T_2})\] 

Similarly, the executor generates constraints from other iterations. In this example, the MaxSMT solver returns a value in between of $1$ and $2$ so that constraints from all iterations are satisfied. This value of $T$ is used as arg in the candidate tuple $(D_1, D_2, arg)$.

\section{Experiments}\label{sec:experiments}
We implemented our counterexample detection framework with all components,
including hypothesis test, event selector and input generator. The
implementation is publicly available\footnote{
\url{https://github.com/cmla-psu/statdp}.}. The tool takes in an algorithm
implementation and the desired privacy bound $\epsilon_0$, and generates
counterexamples if the algorithm does not satisfy $\epsilon_0$-differential
privacy.

In this section we evaluate our detection framework on some of the popular privacy mechanisms and their variations. We demonstrate the power of our tool: for mechanisms that falsely claim to be differentially private, our tool produces convincing evidence that this is not the case in just a few seconds.

\begin{figure*}
\begin{subfigure}[b]{0.45\textwidth}
\captionsetup{width=.9\linewidth}
\includegraphics[width=\linewidth]{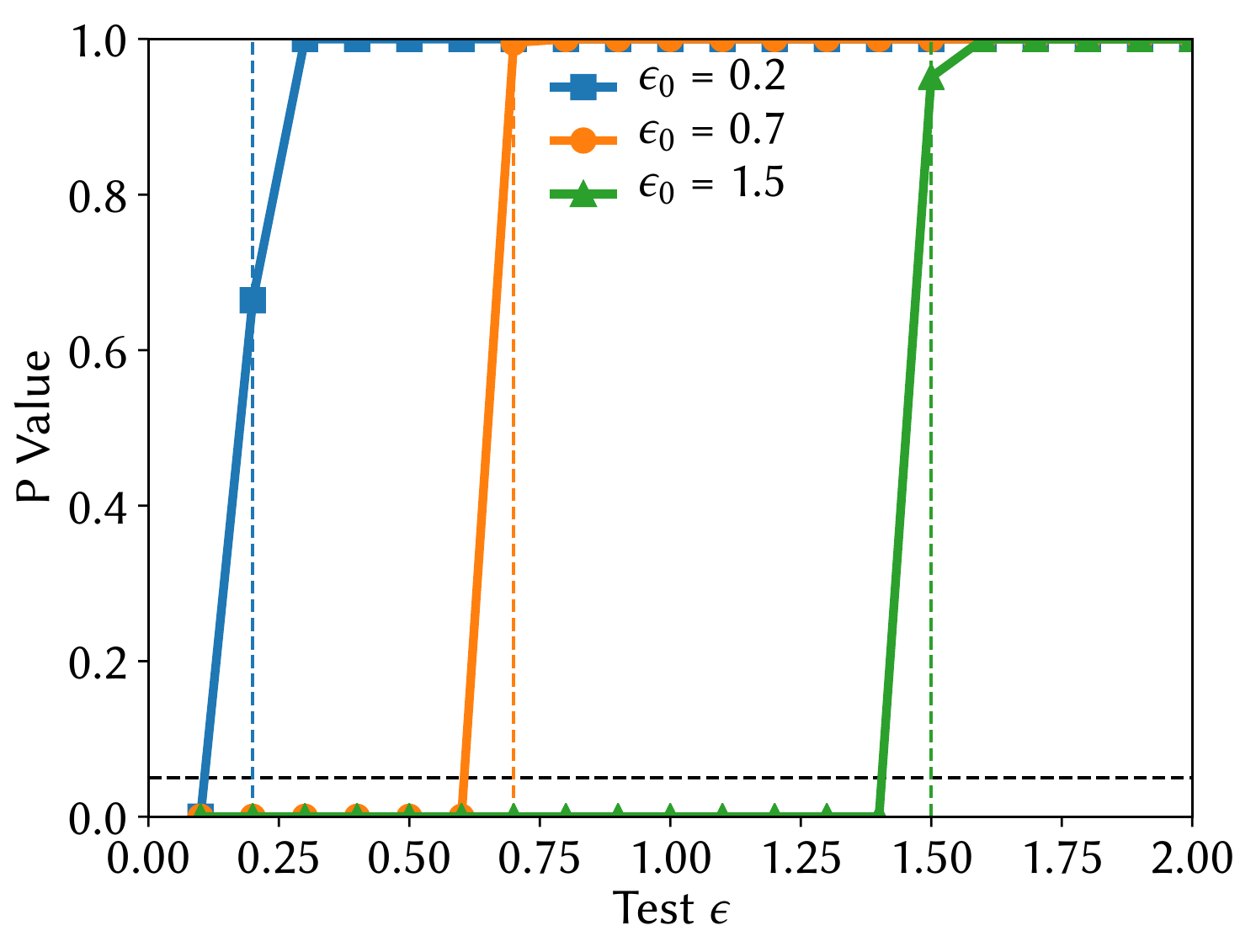}
\caption{Correct Noisy Max with \textit{Laplace} noise\label{fig:noisy_max_v1a}.}
\end{subfigure}
\begin{subfigure}[b]{0.45\textwidth}
\captionsetup{width=.9\linewidth}
\includegraphics[width=\linewidth]{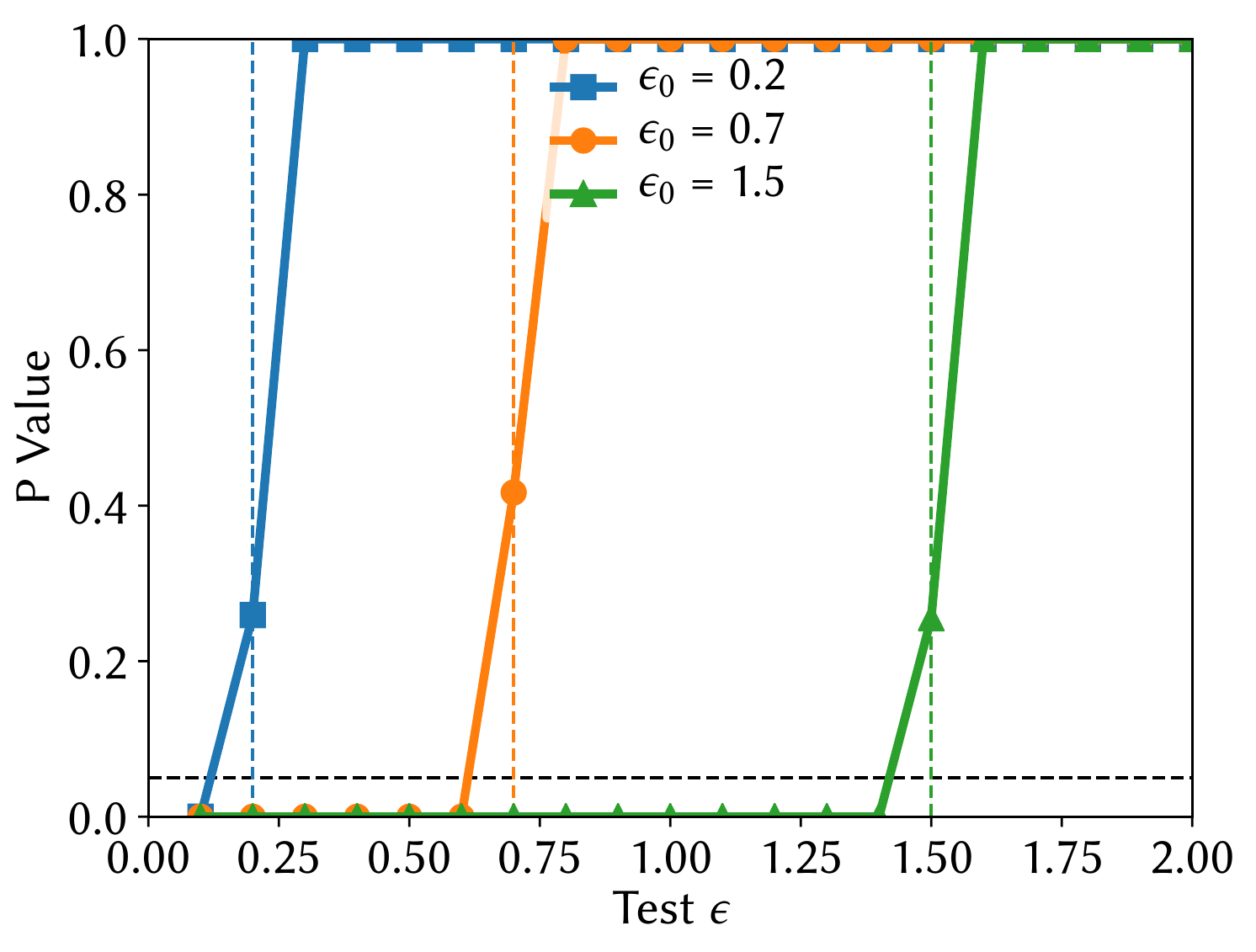}
\caption{Correct Noisy Max with \textit{Exponential} noise\label{fig:noisy_max_v2a}.}
\end{subfigure}
\begin{subfigure}[b]{0.45\textwidth}
\captionsetup{width=.9\linewidth}
\includegraphics[width=\linewidth]{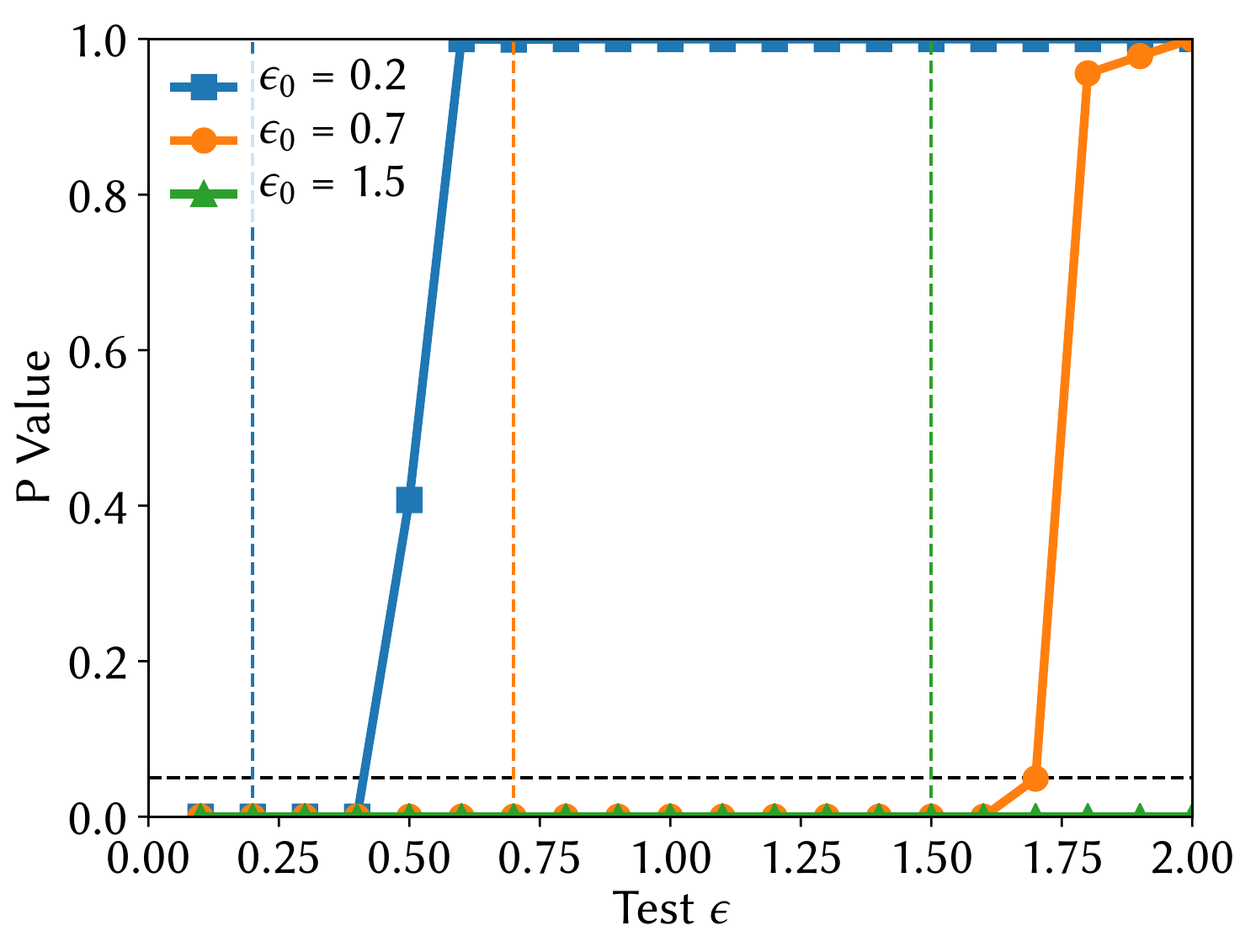}
\caption{Incorrect variant with \textit{Laplace} noise. It returns the maximum value instead of the index. \label{fig:noisy_max_v1b}}
\end{subfigure}
\begin{subfigure}[b]{0.45\textwidth}
\captionsetup{width=.9\linewidth}
\includegraphics[width=\linewidth]{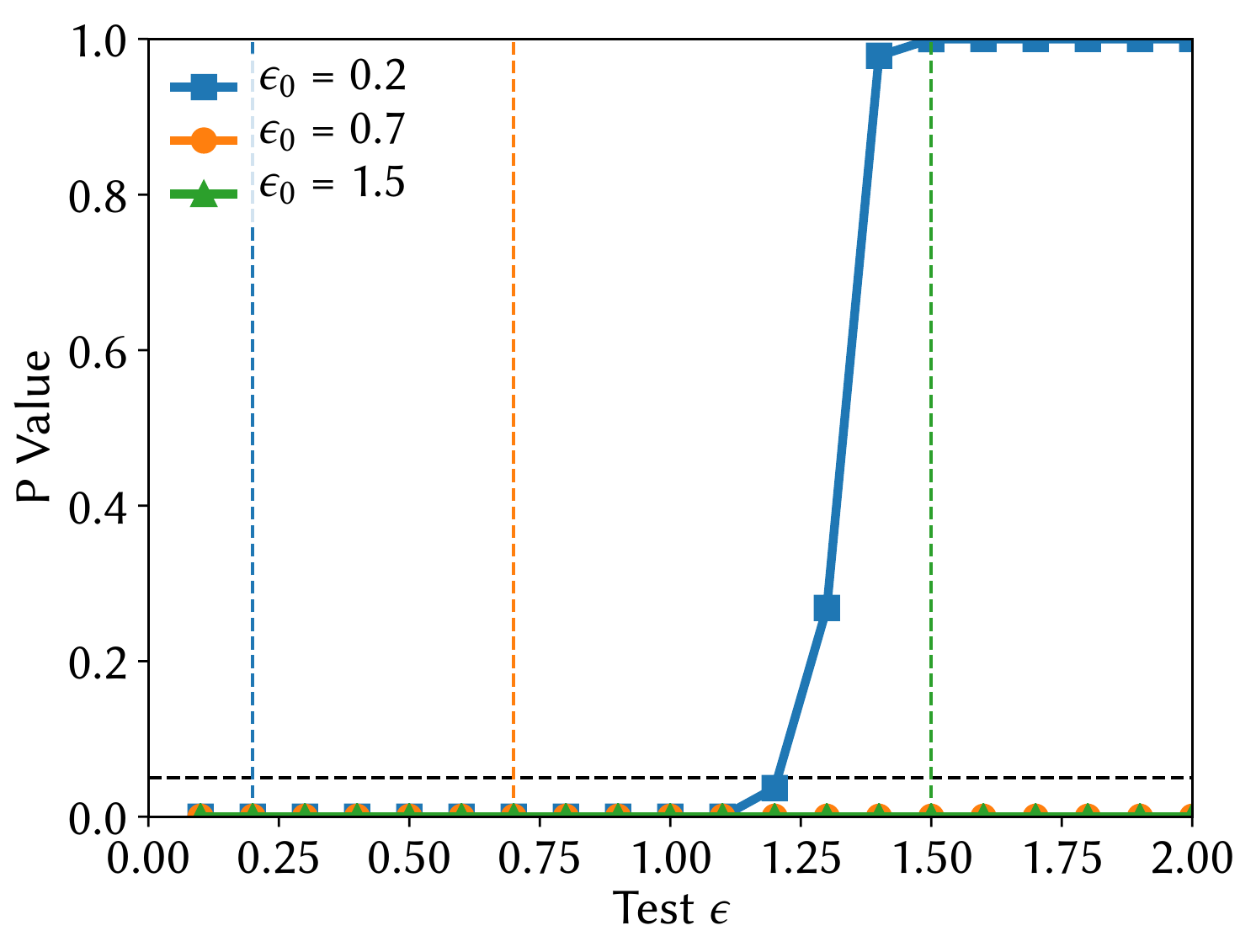}
\caption{Incorrect variant with \textit{Exponential} noise. It returns the maximum value instead of the index. \label{fig:noisy_max_v2b}}
\end{subfigure}
\caption{Results of Noisy Max algorithm and its variants. \label{fig:noisy_max}}
\end{figure*}

\subsection{Noisy Max}

\textit{Report Noisy Max} reports which one among a list of counting queries has the largest value. It adds $Lap(2/\epsilon_0)$ noise to each answer and returns the \textit{index} of the query with the largest noisy answer. The correct versions have been proven to satisfy $\epsilon_0$-differential privacy \cite{dwork2014algorithmic} no matter how long the input list is. A naive proof would show that it satisfies $(\epsilon_0\cdot |Q|/2)$-differential privacy (where $|Q|$ is the length of the input query list), but a clever proof shows that it actually satisfies $\epsilon_0$-differential privacy.

\begin{algorithm}[ht]
\SetKwProg{Fn}{function}{\string:}{}
\SetKwFunction{Test}{NoisyMax}
\SetKwInOut{Input}{input}
\DontPrintSemicolon
\Fn{\Test{$Q$, $\epsilon_0$}}{
\Input{$Q$: queries to the database, $\epsilon_0$: privacy budget.}
NoisyVector $\gets$ $[\ ]$\\
\For{$i=1\dots\text{len(Q)}$}{
	NoisyVector$[i]$ $\gets Q[i]+\textrm{Lap}(2/\epsilon_0)$
}
\Return $argmax$(NoisyVector)
}
\caption{Correct Noisy Max with Laplace noise\label{alg:noisy_max_v1a}}
\end{algorithm}

\subsubsection{Adding $Laplace$ Noise\label{subsec:laplace_noisymax}}
The correct Noisy Max algorithm (Algorithm \ref{alg:noisy_max_v1a}) adds independent $\text{Lap}(2/\epsilon_0)$ noise to each query answer and returns the index of the maximum value. As Figure \ref{fig:noisy_max_v1a} shows, we test this algorithm for different privacy budget $\epsilon_0$ at $0.2,\ 0.7,\ 1.5$. All lines rise when the test $\epsilon$ is slightly less than the claimed privacy level $\epsilon_0$ of the algorithm. This demonstrates the precision of our tool: before $\epsilon_0$, there is almost 0 chance to falsely claim that this algorithm is not private; after $\epsilon_0$, the $p$-value is too large to conclude that the algorithm is incorrect. We note that the test result is very close to the ideal cases, illustrated by the vertical dashed lines.

\begin{algorithm}[ht]
\SetKwProg{Fn}{function}{\string:}{}
\SetKwFunction{Test}{NoisyMax}
\SetKwInOut{Input}{input}
\DontPrintSemicolon
\Fn{\Test{$Q$, $\epsilon_0$}}{
\Input{$Q$: queries to the database, $\epsilon_0$: privacy budget.}
NoisyVector $\gets$ $[\ ]$\\
\For{$i=1\dots\text{len(Q)}$}{
	NoisyVector$[i]$ $\gets Q[i]+\text{Exponential}(2/\epsilon_0)$
}
\Return $argmax$(NoisyVector)
}
\caption{Correct Noisy Max with Exponential noise\label{alg:noisy_max_v2a}}
\end{algorithm}

\begin{algorithm}[t]
\SetKwProg{Fn}{function}{\string:}{}
\SetKwFunction{Test}{NoisyMax}
\SetKwInOut{Input}{input}
\DontPrintSemicolon
\Fn{\Test{$Q$, $\epsilon_0$}}{
\Input{$Q$: queries to the database, $\epsilon_0$: privacy budget.}
NoisyVector $\gets$ $[\ ]$\\
\For{$i=1\dots\text{len(Q)}$}{
	NoisyVector$[i]$ $\gets Q[i]+\text{Laplace}(2/\epsilon_0)$
}
// returns maximum value instead of index \\
\fbox{\Return $max$(NoisyVector)}}
\caption{Incorrect Noisy Max with Laplace noise, returning the maximum value \label{alg:noisy_max_v1b}}
\end{algorithm}

\begin{algorithm}[t]
\SetKwProg{Fn}{function}{\string:}{}
\SetKwFunction{Test}{NoisyMax}
\SetKwInOut{Input}{input}
\DontPrintSemicolon
\Fn{\Test{$Q$, $\epsilon_0$}}{
\Input{$Q$: queries to the database, $\epsilon_0$: privacy budget.}
NoisyVector $\gets$ $[\ ]$\\
\For{$i=1\dots\text{len(Q)}$}{
	NoisyVector$[i]$ $\gets Q[i]+\text{Exponential}(2/\epsilon_0)$
}
// returns maximum value instead of index \\
\fbox{\Return $max$(NoisyVector)}
}
\caption{Incorrect Noisy Max with Exponential noise, returning the maximum value \label{alg:noisy_max_v2b}}
\end{algorithm}

\begin{algorithm}[t]
\SetKwProg{Fn}{function}{\string:}{}
\SetKwFunction{Test}{Histogram}
\SetKwInOut{Input}{input}
\DontPrintSemicolon
\Fn{\Test{$Q$, $\epsilon_0$}}{
\Input{$Q$:queries to the database, $\epsilon_0$: privacy budget.}
NoisyVector $\gets$ $[\ ]$\\
\For{$i=1\dots\text{len(Q)}$}{
	NoisyVector$[i]$ $\gets Q[i]+\text{Lap}(1/\epsilon_0)$
}
\Return NoisyVector
}
\caption{Histogram \label{alg:hist}}
\end{algorithm}

\begin{algorithm}[t]
\SetKwProg{Fn}{function}{\string:}{}
\SetKwFunction{Test}{Histogram}
\SetKwInOut{Input}{input}
\DontPrintSemicolon
\Fn{\Test{$Q$, $\epsilon_0$}}{
\Input{$Q$: queries to the database, $\epsilon_0$: privacy budget.}
NoisyVector $\gets$ $[\ ]$\\
\For{$i=1\dots\text{len(Q)}$}{
    // wrong scale of noise is added
	\fbox{NoisyVector$[i]$ $\gets Q[i]+\text{Lap}(\epsilon_0)$}
}
\Return NoisyVector
}
\caption{Histogram with wrong scale  \label{alg:hist_wrong_scale}}
\end{algorithm}

\subsubsection{Adding Exponential Noise}
One correct variant of Noisy Max  adds $\text{Exponential}(2/\epsilon_0)$ noise, rather than Laplace noise, to each query answer(Algorithm \ref{alg:noisy_max_v2a}). This mechanism has also been proven to be $\epsilon_0$-differential private\cite{dwork2014algorithmic}. Figure \ref{fig:noisy_max_v2a} shows the corresponding test result, which is similar to that of Figure \ref{fig:noisy_max_v1a}. The result indicates that this correct variant likely satisfies $\epsilon_0$-differential privacy for the claimed privacy budget.

\subsubsection{Incorrect Variants of Exponential Noise}

An incorrect variant of $Noisy Max$ has the same setup but instead of returning
the \textit{index} of maximum value, it directly returns the maximum value. 
We evaluate on two variants that report the maximum value instead of the index
(Algorithm \ref{alg:noisy_max_v1b} and \ref{alg:noisy_max_v2b}) and show the
test result in Figure \ref{fig:noisy_max_v1b} and \ref{fig:noisy_max_v2b}. 

For the variant using Laplace noise (Figure \ref{fig:noisy_max_v1b}), we can
see that for $\epsilon_0=0.2$, the line rises at around test $\epsilon$ of 0.4,
indicating that  this algorithm is incorrect for the claimed privacy budget of
$0.2$. The same pattern happens when we set privacy budget to be 0.7 and 1.5:
all lines rise much later than their claimed privacy budget. In this incorrect
version, returning the maximum value (instead of its index) causes the
algorithm to actually satisfy $\epsilon_0 \cdot |Q|/2$ differential privacy
instead of $\epsilon_0$-differential privacy.
For the variant using Exponential noise (Figure \ref{fig:noisy_max_v2b}), the
lines rise much later than the claimed privacy budgets, indicating strong
evidence that this variant is indeed incorrect. Also, we can hardly see the
lines for privacy budgets 0.7 and 1.5, since their p-values remain 0 for all
the test $\epsilon$ ranging from 0 to 2.2 in the experiment.

\begin{figure*}
\begin{subfigure}[t]{0.45\textwidth}
\captionsetup{width=.9\linewidth}
\includegraphics[width=\linewidth]{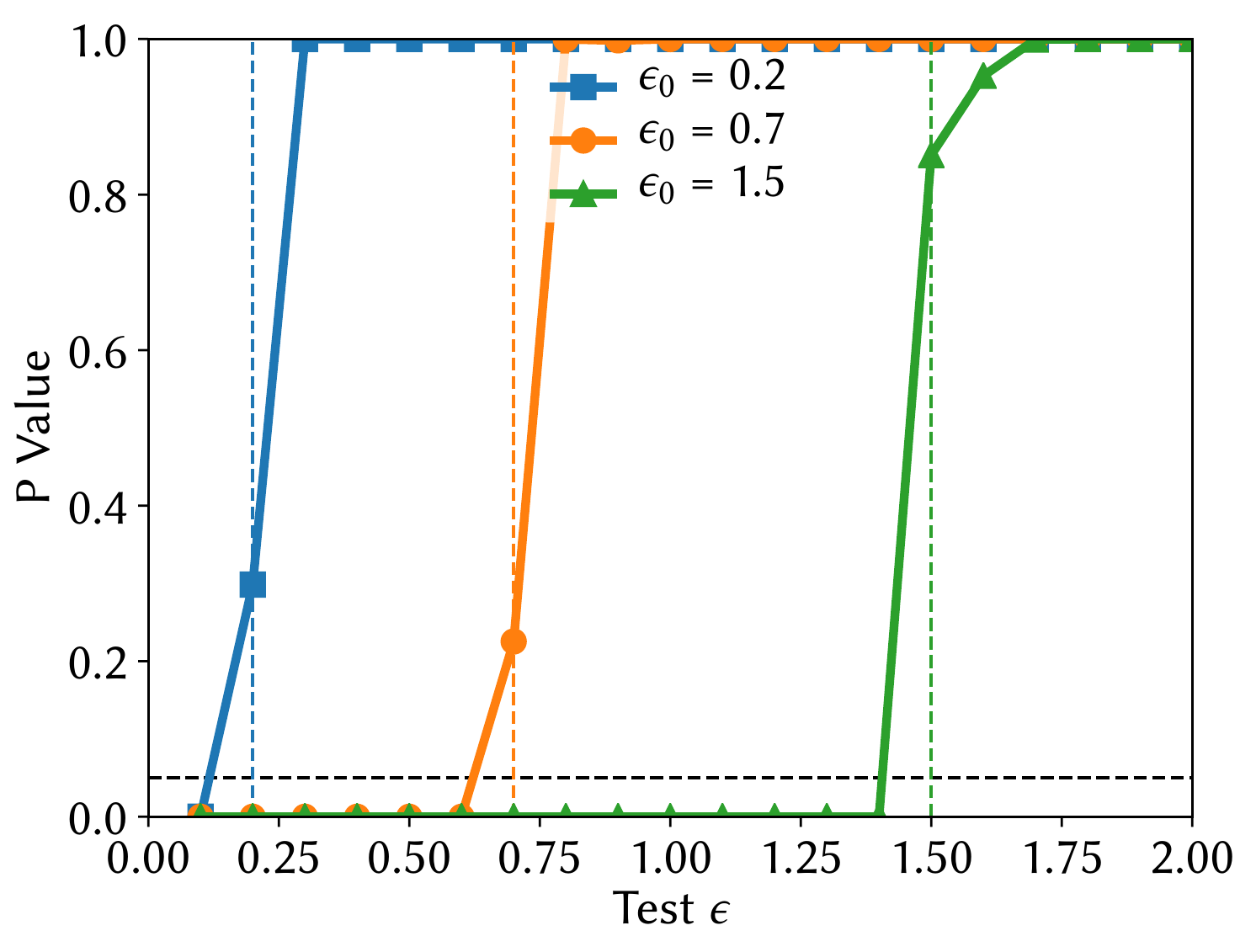}
\caption{Correct Histogram algorithm with Lap($1/\epsilon_0$) noise. 
\label{fig:histogram1/eps}}
\end{subfigure}
\begin{subfigure}[t]{0.45\textwidth}
\captionsetup{width=.9\linewidth}
\includegraphics[width=\linewidth]{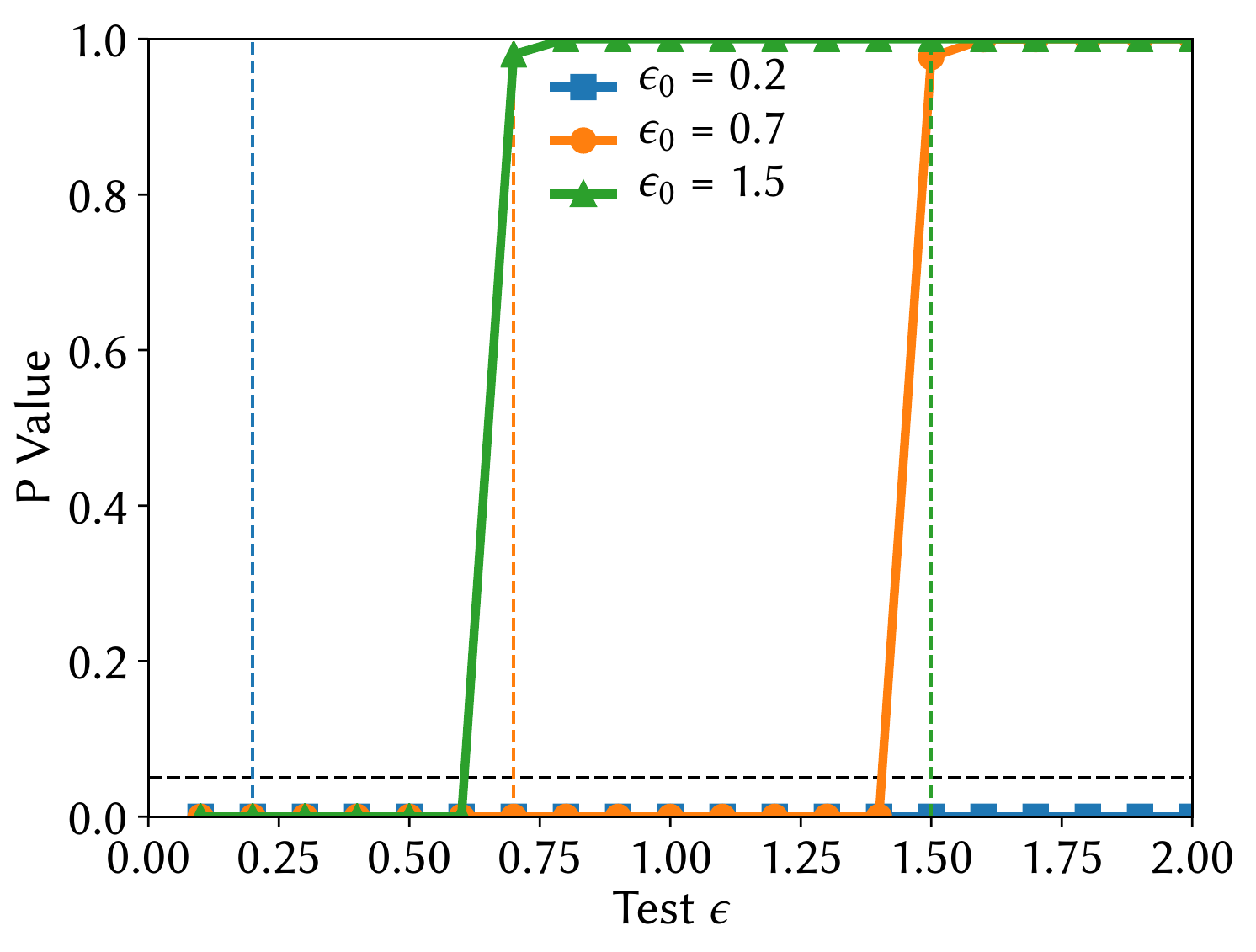}
\caption{Incorrect Histogram algorithm with Lap($\epsilon_0$) noise. It provides more privacy than advertised when $\epsilon_0\geq 1$ and less privacy than advertised when $\epsilon_0<1$.
\label{fig:histogrameps}}
\end{subfigure}
\caption{Results of Histogram algorithm and its variants \label{fig:histogram and laplace}}
\end{figure*}

\subsection{Histogram}
The Histogram algorithm \cite{Dwork:2006:DP:2097282.2097284} is a very simple algorithm for publishing an approximate histogram of the data. The input is a histogram and the output is a noisy histogram with the same dimensions. The Histogram algorithm requires input queries to differ in at most one element. Here we evaluate with different scale parameters for the added Laplace noise. 

The correct Histogram algorithm adds independent $\text{Lap}(1/\epsilon_0)$ noise to each query answer, as shown in Algorithm \ref{alg:hist}. Since at most one query answer may differ by at most 1, returning the maximum value is $\epsilon_0$-differentially private \cite{Dwork:2006:DP:2097282.2097284}.

To mimic common mistakes made by novices of differential privacy, we also evaluate on an incorrect variant where $\text{Lap}(\epsilon_0)$ noise is used in the algorithm (Algorithm \ref{alg:hist_wrong_scale}). We note that the incorrect variant here satisfies $1/\epsilon_0$-differential privacy, rather the claimed $\epsilon_0$-differential privacy.

Figures \ref{fig:histogram1/eps} and \ref{fig:histogrameps} show the test results for the correct and incorrect variants respectively. Here, Figures \ref{fig:histogram1/eps} indicates that the correct implementation satisfies the claimed privacy budgets. For the incorrect variant, the claimed budgets of 0.2 and 0.7 are correctly rejected; this is expected since the true privacy budgets are $1/0.2$ and $1/0.7$ respectively for this incorrect version. Interestingly, the result indicates that for $\epsilon_0=1.5$, this algorithm is likely to be more private than claimed (the line rise around 0.6 rather than 1.5). Again, this is expected since in this case, the variant is indeed $1/1.5 = 0.67$-differentially private.

\subsection{Sparse Vector}
The Sparse Vector Technique (SVT) \cite{Dwork09STOCOnTheComplexity} (see Algorithm \ref{algo:sparse_vector_lyu}) is a powerful mechanism for answering numerical queries. It takes a list of numerical queries and simply reports whether their answers are above or below a preset threshold $T$. It allows the program to output some noisy query answers without any privacy cost. In particular, arbitrarily many ``below threshold'' answers can be returned, but only at most $N$ ``above threshold'' answers can be returned. Because of this remarkable property, there are many variants proposed in both published papers and practical use. However, most of them turn out to be actually not differentially private\cite{lyu2017understanding}. We test our tool on a correct implementation of SVT and the major incorrect variants summarized in \cite{lyu2017understanding}. In the following, we describe what the variants do and list their pseudocodes.

\begin{algorithm}[ht]
\SetKwProg{Fn}{function}{\string:}{}
\SetKwFunction{Test}{SVT}
\SetKwInOut{Input}{input}
\DontPrintSemicolon
\Input{$Q$: queries to the database, $\epsilon_0$: privacy budget\\$T$: threshold, $N$: bound of outputting $True$'s\\$\Delta$: sensitivity}
\Fn{\Test{$Q$, $T$, $\epsilon_0$, $\Delta$, $N$}}{
$out \gets [\,]$\;
$\eta_1 \gets \textrm{Lap}(2*\Delta/\epsilon_0)$\;
$\tilde{T} \gets T + \eta_1$\;
$count \gets 0$\;
\ForEach{$q$ in $Q$}{
$\eta_2 \gets \textrm{Lap}(4*N*\Delta/\epsilon_0)$\;
\eIf{$q+\eta_2\geq \tilde{T}$}
{
$out \gets True::out$\;
$count \gets count + 1$\;
\If{$count \geq N$}{\bfseries Break}
}
{$out \gets False::out$\;}
}
\Return ($out$)
}
\caption{SVT \cite{lyu2017understanding}.\label{algo:sparse_vector_lyu}}
\end{algorithm}

\subsubsection{SVT \cite{lyu2017understanding}}
Lyu et al. have proposed an implementation of SVT and proved that it satisfies $\epsilon_0$-differential privacy. This algorithm (Algorithm \ref{algo:sparse_vector_lyu}) tries to allocate the global privacy budget $\epsilon_0$ into two parts: half of the privacy budget goes to the threshold, and the other half goes to values which are above the threshold. There will not be any privacy cost if the noisy value is below the noisy threshold, in which case the program will output a \textit{False}. If the noisy value is above the noisy threshold, the program will output a \textit{True}. After outputting a certain amount ($N$) of \textit{True}'s, the program will halt. 

Figure \ref{fig:sparse_vector_lyu} shows the test result for this correct implementation. All lines rise around the true privacy budget, indicating that our tool correctly conclude that this algorithm is correct.

\subsubsection{iSVT 1 \cite{stoddard2014differentially}}
One incorrect variant (Algorithm \ref{alg:sparse_vector_stoddard}) adds no noise to the query answers, and has no bound on the number of \textit{True}'s that the algorithm can output. This implementation does not satisfy $\epsilon_0$-differential privacy for any finite $\epsilon_0$. 

This expectation is consistent with the test result shown in Figure \ref{fig:sparse_vector_stoddard}: the p-value never rises at any test $\epsilon$. This result strongly indicates that this implementation with claimed privacy budget 0.2, 0.7, 1.5 is not private for at least any $\epsilon\leq 2.2$.

\begin{algorithm}[t]
\SetKwProg{Fn}{function}{\string:}{}
\SetKwFunction{Test}{iSVT1}
\SetKwInOut{Input}{input}
\DontPrintSemicolon
\Input{$Q$: queries to the database, $\epsilon_0$: privacy budget\\$T$: threshold, $\Delta$: sensitivity}
\Fn{\Test{$Q$, $T$, $\epsilon_0$,  $\Delta$}}{
$out \gets [\,]$\;
$\eta_1 \gets \textrm{Lap}(2 * \Delta/\epsilon_0)$\;
$\tilde{T} \gets T + \eta_1$\;
// no bounds on number of outputs\\
\ForEach{\fbox{$q$ in $Q$}}{
// adds no noise to query answers \\
\fbox{$\eta_2 \gets 0$\; } \\
\eIf{$q+\eta_2\geq \tilde{T}$}
{
$out \gets True::out$\;
}
{$out \gets False::out$\;}
}
\Return ($out$)
}
\caption{iSVT 1 \cite{stoddard2014differentially}. This does not add noise to the query answers, and has no bound on number of \textit{True}'s to output (i.e., $N$). This is  not private for any privacy budget $\epsilon_0$ . \label{alg:sparse_vector_stoddard}}
\end{algorithm}

\begin{algorithm}[t]
\SetKwProg{Fn}{function}{\string:}{}
\SetKwFunction{Test}{iSVT3}
\SetKwInOut{Input}{input}
\DontPrintSemicolon
\Input{$Q$: queries to the database, $\epsilon_0$: privacy budget\\$T$: threshold, $N$: bound of outputting $True$'s\\$\Delta$: sensitivity}
\Fn{\Test{$Q$, $T$, $\epsilon_0$, $\Delta$, $N$}}{
$out \gets [\,]$\;
$\eta_1 \gets \textrm{Lap}(4*\Delta/\epsilon_0)$\;
$\tilde{T} \gets T + \eta_1$\;
$count \gets 0$\;

\ForEach{$q$ in $Q$}{
// noise added doesn't scale with N  \\
\fbox{$\eta_2 \gets \textrm{Lap}(4*\Delta/(3*\epsilon_0))$\; }

\eIf{$q+\eta_2\geq \tilde{T}$}
{
$out \gets True::out$\;
$count \gets count + 1$\;
\If{$count \geq N$}{\bfseries Break}
}
{$out \gets False::out$\;}
}
\Return ($out$)
}
\caption{iSVT 3 \cite{lee2014top}. The noise added to queries doesn't scale with $N$. The actual privacy cost is $\frac{1+6N}{4}\epsilon_0$. \label{alg:sparse_vector_lee}}
\end{algorithm}

\subsubsection{iSVT 2 \cite{chen2015differentially}}
Another incorrect variant (Algorithm \ref{alg:sparse_vector_chen}) has no bounds on the number of \textit{True}'s the algorithm can output. Without the bounds, the algorithm will still output \textit{True} even if it has exhausted its privacy budget. So this variant is not private for any finite $\epsilon_0$.

\begin{algorithm}[t]
\SetKwProg{Fn}{function}{\string:}{}
\SetKwFunction{Test}{iSVT2}
\SetKwInOut{Input}{input}
\DontPrintSemicolon
\Input{$Q$: queries to the database, $\epsilon_0$: privacy budget\\$T$: threshold, $\Delta$: sensitivity}
\Fn{\Test{$Q$, $T$, $\epsilon_0$, $\Delta$}}{
$out \gets [\,]$\;
$\eta_1 \gets \textrm{Lap}(2*\Delta/\epsilon_0)$\;
$\tilde{T} \gets T + \eta_1$\;
// no bounds on number of outputs\;

\ForEach{\fbox{$q$ in $Q$}}{ \;
$\eta_2 \gets \textrm{Lap}(2*\Delta/\epsilon_0)$\;
\eIf{$q+\eta_2\geq \tilde{T}$}
{
$out \gets True::out$\;
}
{$out \gets False::out$\;}
}
\Return ($out$)
}
\caption{iSVT 2 \cite{chen2015differentially}. This one has no bounds on number of \textit{True}'s (i.e, $N$) to output.  This is not private for any finite privacy budget $\epsilon_0$.\label{alg:sparse_vector_chen}}
\end{algorithm}

\begin{algorithm}[t]
\SetKwProg{Fn}{function}{\string:}{}
\SetKwFunction{Test}{iSVT4}
\SetKwInOut{Input}{input}
\DontPrintSemicolon
\Input{$Q$: queries to the database, $\epsilon_0$: privacy budget\\$T$: threshold, $N$: bound of outputting $True$'s\\$\Delta$: sensitivity}
\Fn{\Test{$Q$, $T$, $\epsilon_0$, $\Delta$, $N$}}{
$out \gets [\,]$\;
$\eta_1 \gets \textrm{Lap}(2*\Delta/\epsilon_0)$\;
$\tilde{T} \gets T + \eta_1$\;
$count \gets 0$\;
\ForEach{$q$ in $Q$}{
$\eta_2 \gets \textrm{Lap}(2*N*\Delta/\epsilon_0)$\;
\eIf{$q+\eta_2\geq \tilde{T}$}
{
// output numerical value instead of boolean value\\
\fbox{$out \gets (q+\eta_2)::out$\;}\\
$count \gets count + 1$\;
\If{$count \geq N$}{\bfseries Break}
}
{$out \gets False::out$\;}
}
\Return ($out$)
}
\caption{iSVT 4  \cite{roth2011notes}. When the noisy query answer is above the threshold, output the actual value of noisy query answer.\label{alg:sparse_vector_roth}}
\end{algorithm}

\begin{figure*}[!ht]
\begin{subfigure}[b]{0.45\textwidth}
\captionsetup{width=.9\linewidth}
\includegraphics[width=\linewidth]{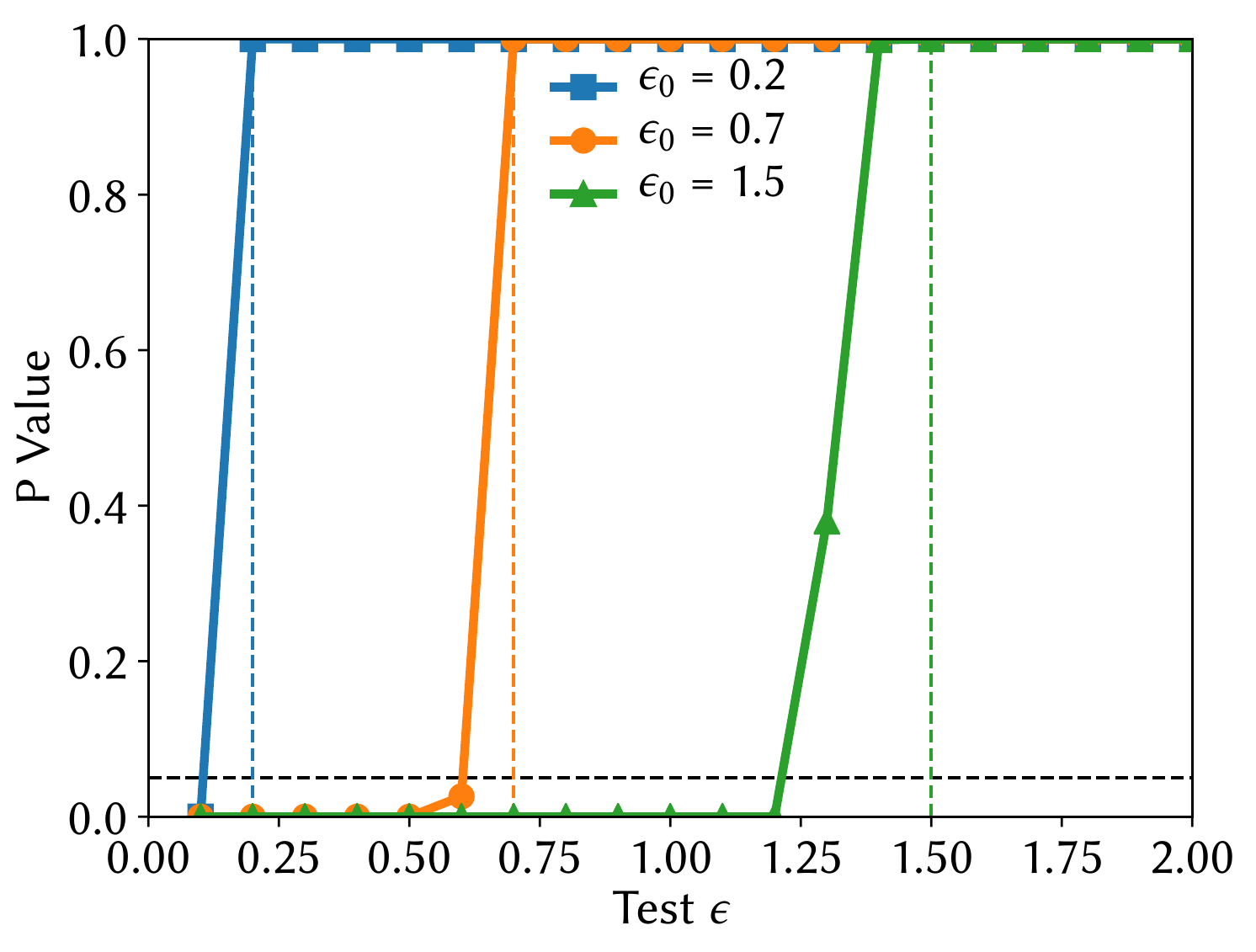}
\caption{Correct implementation of SVT \cite{lyu2017understanding}. \label{fig:sparse_vector_lyu}}
\end{subfigure}
\begin{subfigure}[b]{0.45\textwidth}
\captionsetup{width=.9\linewidth}
\includegraphics[width=\linewidth]{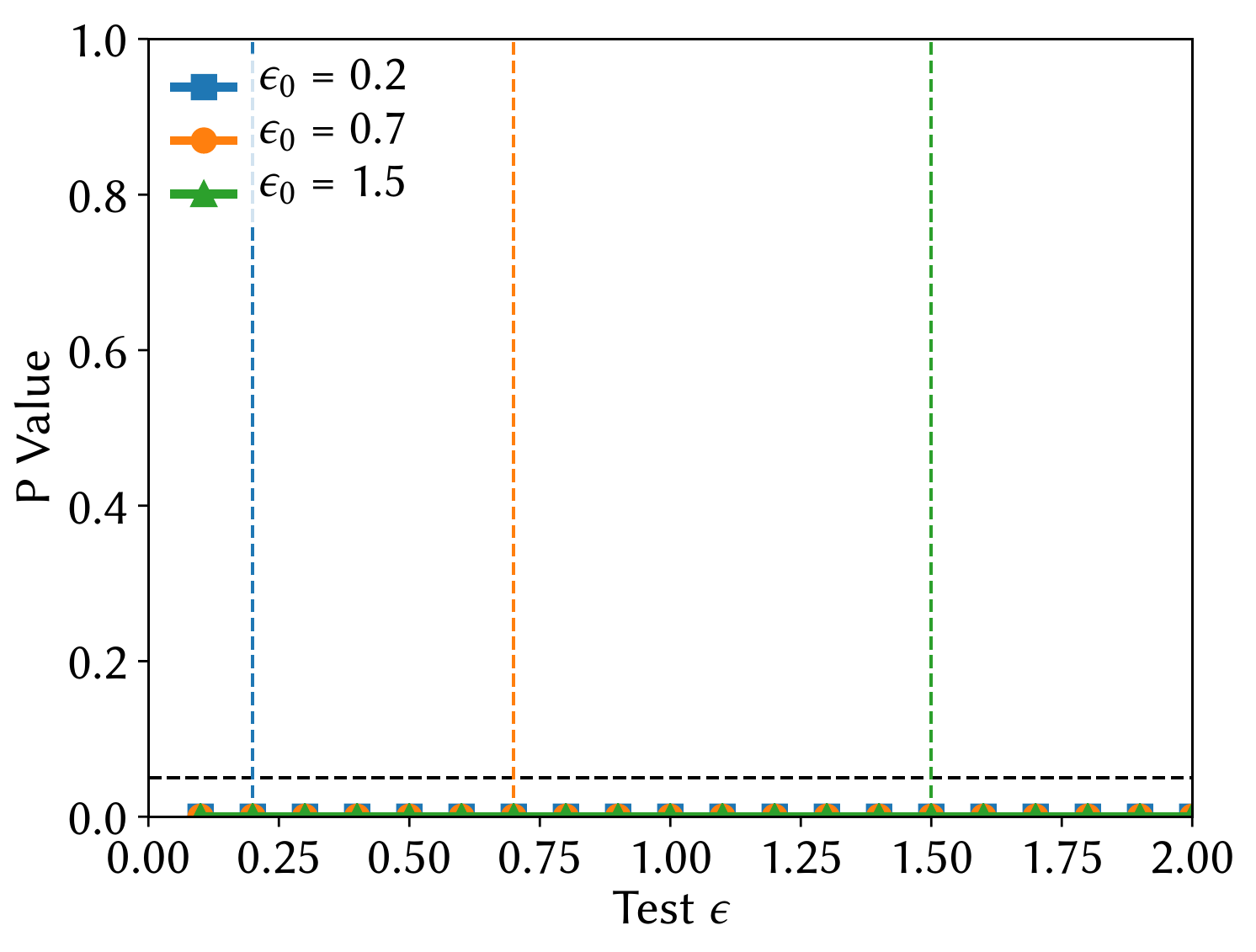}
\caption{iSVT 1 \cite{stoddard2014differentially}\label{fig:sparse_vector_stoddard} adds no noise to query and threshold.}
\end{subfigure}
\begin{subfigure}[b]{0.45\textwidth}
\captionsetup{width=.9\linewidth}
\includegraphics[width=\linewidth]{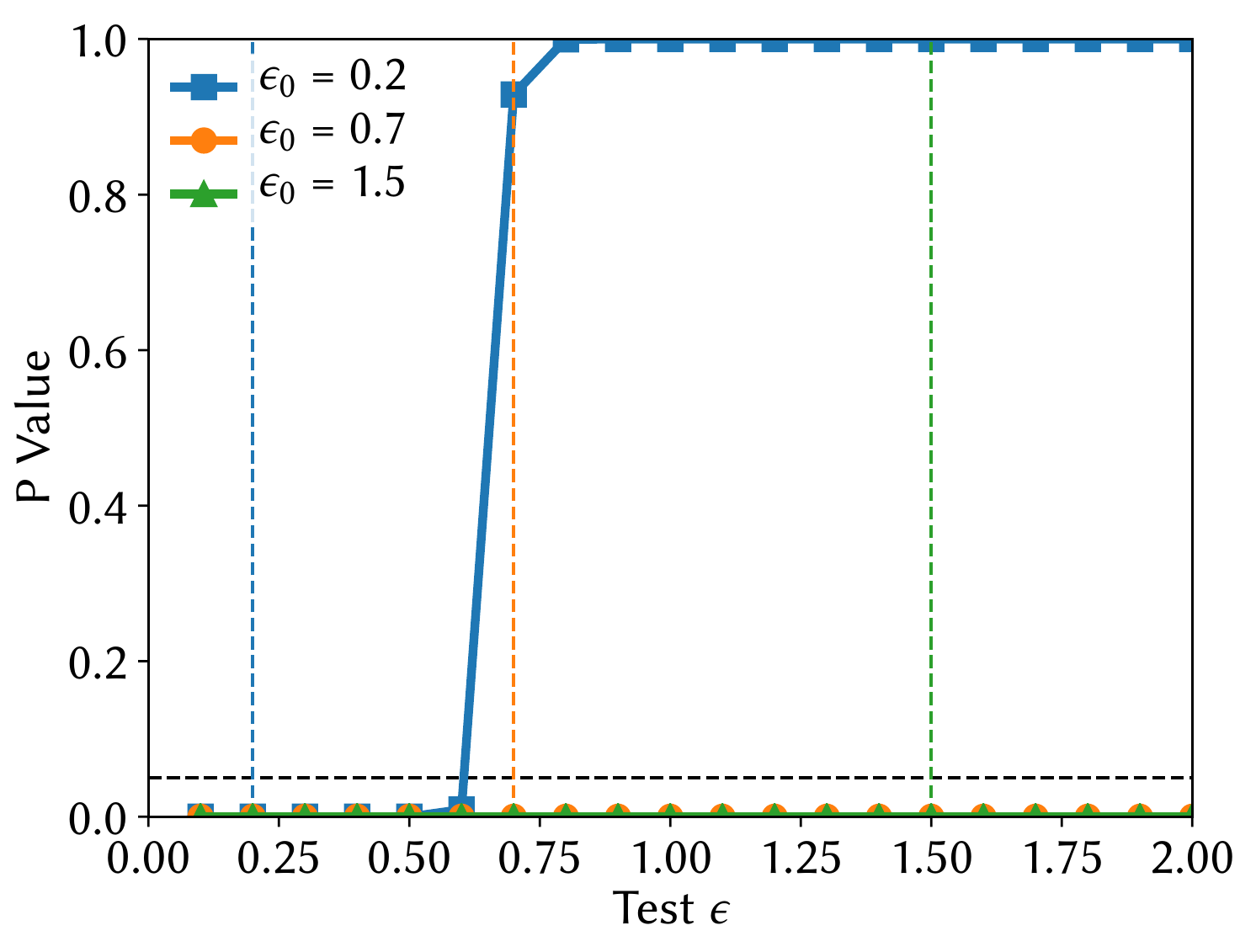}
\caption{iSVT 2 \cite{chen2015differentially} no bounds on outputting \textit{True}'s. \label{fig:sparse_vector_chen}}
\end{subfigure}
\begin{subfigure}[b]{0.45\textwidth}
\captionsetup{width=.9\linewidth}
\includegraphics[width=\linewidth]{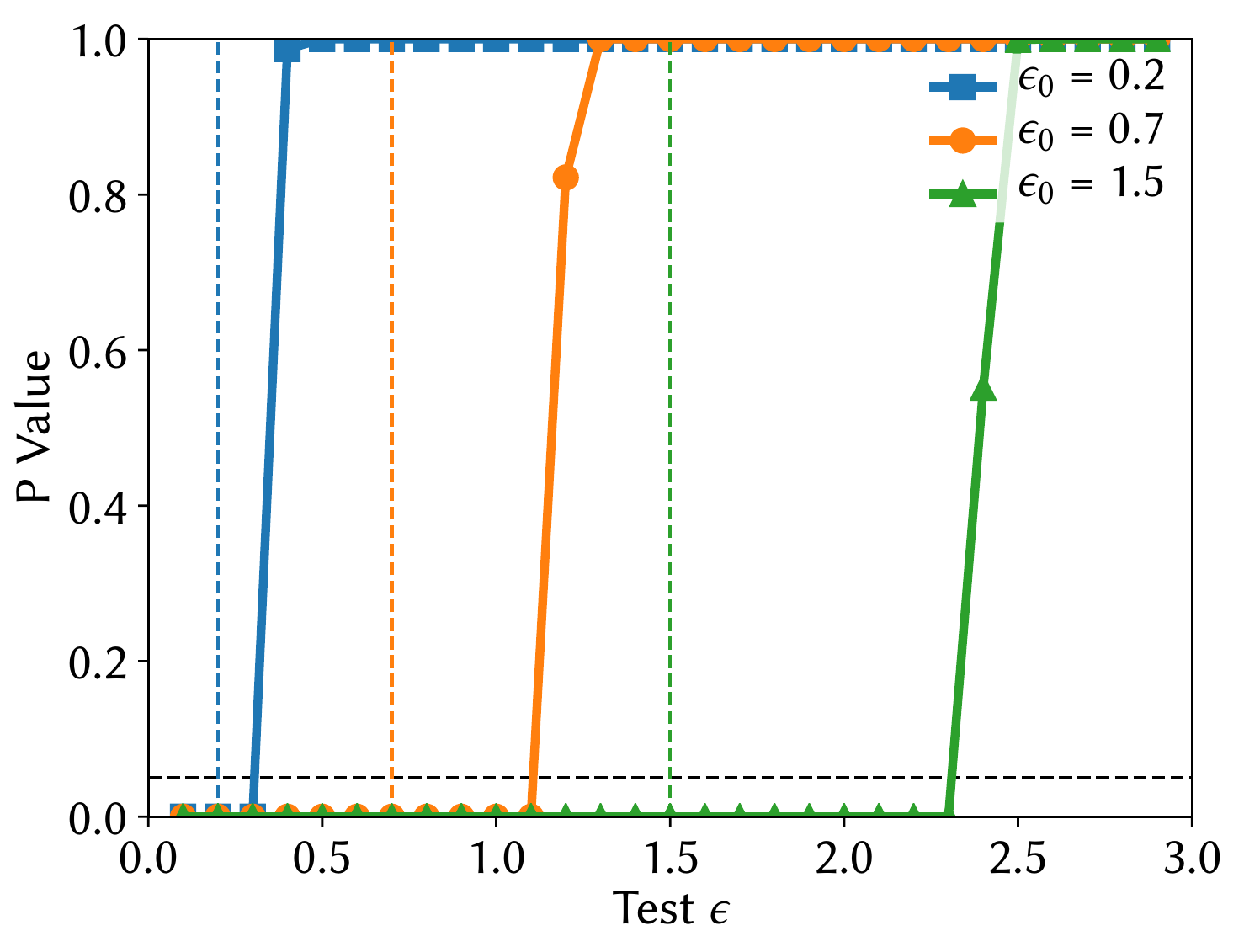}
\caption{iSVT 3 \cite{lee2014top} query noise does not scale with $N$. \label{fig:sparse_vector_lee}}
\end{subfigure}
\begin{subfigure}[b]{0.45\textwidth}
\captionsetup{width=.9\linewidth}
\includegraphics[width=\linewidth]{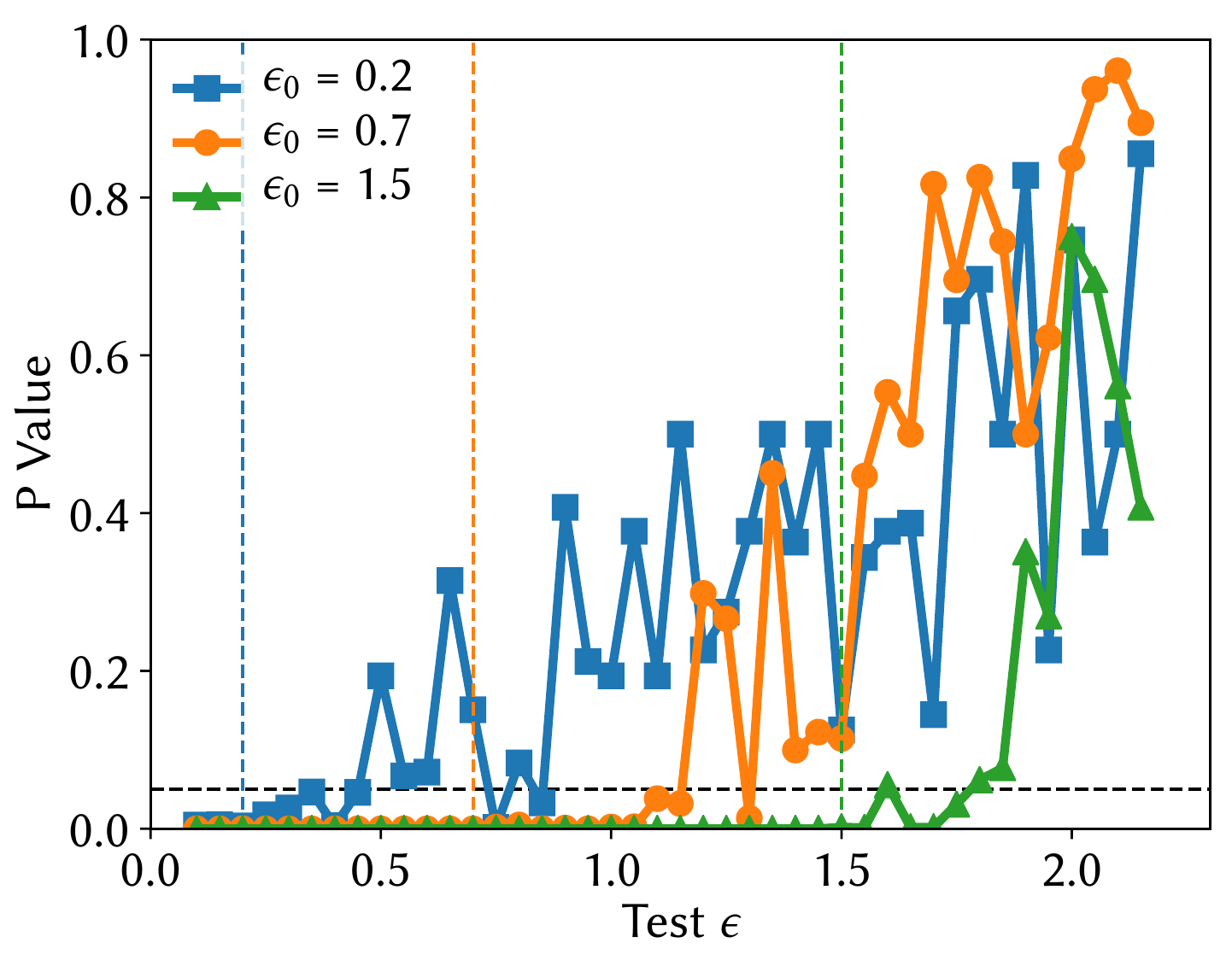}
\caption{iSVT 4 \cite{roth2011notes} outputs the actual query answer when it is above the threshold.\label{fig:sparse_vector_roth}}
\end{subfigure}
\caption{Results for variants of Sparse Vector Technique}
\end{figure*}

\begin{table*}[t]
\caption{Counterexamples detected for incorrect privacy mechanisms\label{tab:selected_queries}}
\small
\begin{center}
\begin{tabular}{c c c c} 
\thickhline
\textbf{Mechanism} ($\epsilon_0=1.5$) & \textbf{Event E}   & \textbf{D1} & \textbf{D2} \\
\hline

Incorrect Noisy Max with Laplace Noise & $\omega \in (-\infty, 0.0)$ & [1, 1, 1, 1, 1] & [0, 0, 0, 0, 0] \\

Incorrect Noisy Max with Exponential Noise & $\omega \in (-\infty, 1.0)$ & [1, 1, 1, 1, 1] & [0, 0, 0, 0, 0]\\
Incorrect Histogram \cite{dwork06Calibrating}& $\omega[0] \in (-\infty, 1.0)$ & [1, 1, 1, 1, 1] & [2, 1, 1, 1, 1] \\ 

iSVT 1 \cite{stoddard2014differentially} & $t(\omega) = 0$ & [1, 1, 1, 1, 1, 1, 1, 1, 1, 1] & [0, 0, 0, 0, 0, 2, 2, 2, 2, 2]\\
iSVT 2 \cite{chen2015differentially} & $t(\omega) = 9$ & [1, 1, 1, 1, 1, 0, 0, 0, 0, 0] & [0, 0, 0, 0, 0, 1, 1, 1, 1, 1] \\
iSVT 3 \cite{lee2014top} & $t(\omega) = 0$ & [1, 1, 1, 1, 1, 0, 0, 0, 0, 0] & [0, 0, 0, 0, 0, 1, 1, 1, 1, 1] \\
iSVT 4 \cite{roth2011notes} & $(\omega.count(False), \omega[9] ) \in \{9\}\times (-2.4, 2.4)$ & [1, 1, 1, 1, 1, 1, 1, 1, 1, 1] & [0, 0, 0, 0, 0, 0, 0, 0, 0, 0]\\
\thickhline
\end{tabular}
\end{center}
\end{table*}

Figure \ref{fig:sparse_vector_chen} indicates this implementation with privacy budget $\epsilon_0 = 0.2$ is most likely \emph{not} private for any $\epsilon \leq 0.5$. When $\epsilon_0 = 0.7 $, we have detected counterexamples showing the algorithm is likely not private for any $\epsilon \in (0, 2.1]$. When $\epsilon_0 = 1.5 $, we have detected counterexamples showing the algorithm is likely not private for any $\epsilon \in (0, 3.0]$.

\subsubsection{iSVT 3 \cite{lee2014top}}
Another incorrect variant (Algorithm \ref{alg:sparse_vector_lee}) adds noise to queries  but the noise doesn't scale with the bound $N$. The actual privacy budget for this variant is $\frac{1+6N}{4}\epsilon_0$ where $\epsilon_0$ is the input privacy budget.

We note that our tool detects the actual privacy cost, as shown in Figure \ref{fig:sparse_vector_lee}, for this incorrect algorithm. Consider privacy budget $\epsilon_0=0.2$. The corresponding line rises at $0.3$, right before the actual budget $\frac{1+6N}{4}\epsilon_0=0.35$ ($N=1$), suggesting the precision of our tool. The same happens for $\epsilon_0=0.7$ and $1.5$. The two lines rise at $1.1$ and $2.3$, which are close to but before the actual budget $1.225$ and $2.625$, respectively.

\subsubsection{iSVT 4 \cite{roth2011notes}}
Another incorrect variant (Algorithm \ref{alg:sparse_vector_roth}) outputs the actual value of noisy query answer when it is above the noisy threshold. 

The interesting part of this algorithm is that, since it outputs heterogeneous list of booleans and values, our event selector chooses $\{9\} \times (-2.4, 2.4)$. This means we choose an event that consists of 9 booleans (in this case, Falses) followed by a number in $(-2.4, 2.4)$. Figure \ref{fig:sparse_vector_roth} shows much noise in it because this one is almost correct in the sense that violations of differential privacy happen with very low probability; thus it is hard to detect its incorrectness. But we can still see that the lines all rise later than the corresponding claimed privacy budget $\epsilon_0$. Hence, our tool correctly concludes that this algorithm does not satisfy $\epsilon_0$-differential privacy. 

\subsection{Performance}
We performed all experiments on a double $\text{Intel}^{\text{\textregistered}}$ $\text{Xeon}^{\text{\textregistered}}$ E5-2620 v4 @ 2.10GHz CPU machine with 64 GB memory. Our tool is implemented in Anaconda distribution of Python 3 and optimized for running in parallel environment to fully utilize the 32 logical cores of the machine.

\begin{table}[t]
\caption{Time spent on running tool for different algorithms \label{tab:performance}}
\begin{tabular}{c c}
\thickhline
\textbf{Mechanism} & \textbf{Time / Seconds} \\
\hline
Correct Laplace Noisy Max\cite{dwork2008differential} & 4.32 \\
Incorrect Laplace Noisy Max & 9.49 \\
Correct Exponential Noisy Max \cite{dwork2008differential} & 4.25 \\
Incorrect Exponential Noisy Max & 8.70 \\
Histogram \cite{Dwork:2006:DP:2097282.2097284}  & 10.39 \\
Incorrect Histogram  & 11.28 \\
SVT \cite{lyu2017understanding} & 1.99 \\
iSVT 1 \cite{stoddard2014differentially} & 1.62 \\
iSVT 2 \cite{chen2015differentially} & 4.56 \\
iSVT 3 \cite{lee2014top}& 2.56 \\
iSVT 4 \cite{roth2011notes} & 22.97\\
\thickhline
\end{tabular}
\end{table}

For each test $\epsilon$, we set the samples of iteration $n$ to be 500,000 for the hypothesis test and 100,000 for the event selector and query generator. Table \ref{tab:performance} lists the average time spent on hypothesis test for a specific test $\epsilon$ (i.e., the average time spent on generating one single point in the figures) for each algorithm. The results suggest that it is very efficient to run a test for an algorithm against one privacy cost: all tests finish within 23 seconds.

The time difference between Noisy Max, Histogram and Sparse Vector Technique is due to the nature of the algorithms. For SVT, the parameter $N$ is set to 1, meaning that the algorithm will halt once it hit a True branch. For Noisy Max and Histogram, all noise will be calculated and applied to each query answer, consuming more time to calculate p-values. Another factor that will also influence the test time is the search space of events. Correct \textit{Noisy Max} returns an index which we would have a search space of only integers ranging from 1 to the length of queries. However, the incorrect \textit{Noisy Max} will return a real number so the search space would be much larger than the correct one, thus taking more time to find a suitable event $E$. This also occurs in Sparse Vector Technique.

\section{Conclusions and Future Work}\label{sec:conc}
While it is invaluable to formally verify correct differentially-private algorithms, we believe that it is equally important to detect incorrect algorithms and provide counterexamples for them, due to the subtleties involved in algorithm development. We proposed a novel semi-black-box method of evaluating differentially private algorithms, and providing counterexamples for those incorrect ones. We show that within a few seconds, our tool correctly rejects incorrect algorithms (including published ones) and provides counterexamples for them. 

Future work includes extensions that detect violations of differential privacy even if those violations occur with extremely small probabilities. This will require additional extensions such as a more refined use of program analysis techniques (including symbolic execution) that reason about what happens when a program is run on adjacent databases. Additional extensions include counterexample generation for other variants of differential privacy, such as approximate differential privacy, zCDP, and renyi-differential privacy.

\begin{acks}
We thank anonymous CCS reviewers for their helpful suggestions.  This work was
partially funded by NSF awards \#1228669, \#1702760 and \#1566411.
\end{acks}

\bibliographystyle{ACM-Reference-Format}
\bibliography{references,diffpriv}
\clearpage
\appendix

\end{document}